\documentclass[10pt,conference]{IEEEtran}

\usepackage{amsmath,amssymb,amsthm}
\usepackage{algorithmic}
\usepackage{graphicx}
\usepackage{hyperref}
\usepackage{cite}
\usepackage{quantikz}
\usepackage{booktabs}
\usetikzlibrary{backgrounds,decorations.pathreplacing}

\newtheorem{theorem}{Theorem}
\newtheorem{lemma}[theorem]{Lemma}
\newtheorem{proposition}[theorem]{Proposition}
\newtheorem{corollary}[theorem]{Corollary}
\newtheorem{definition}{Definition}

\begin{document}

\title{Coherent Rollout Oracles for\\ Finite-Horizon Sequential Decision Problems}

\author{\IEEEauthorblockN{Nishant Shukla}
\IEEEauthorblockA{Independent Researcher}}

\maketitle

\begin{abstract}
Coherent quantum rollout for sequential decision problems requires a
unitary simulator: randomness must live in explicit quantum registers,
and basis-state selectors must be mapped to actions reversibly. With
branch-dependent valid actions, this mapping is \emph{totalized
coherent rank-select} over an entangled $N$-bit validity mask: return
the position of the $r$-th valid bit, or a sentinel if $r$ is out of
range.

We give the first reversible-circuit complexity analysis of this
primitive. For selector width $w = \lceil \log_2(N+1) \rceil$,
rank-select admits an $O(Nw)$-gate low-ancilla bounded-span scan,
proved gate-optimal in its model, and an $O(N\log w)$-gate low-ancilla
blocked construction when long-range gates are available; across all
bounded-fan-in layouts, the unconditional gate lower bound is
$\Omega(N)$. Composing rank-select with reversible transition and
predicate-evaluation circuits gives an explicit polynomial-size
coherent rollout oracle for finite-horizon planning problems
satisfying these primitive assumptions.

The resulting oracle satisfies the access model of the best-arm
pipeline of Wang et al.~\cite{wang2021quantum}, yielding
$\widetilde{O}(\sqrt{k}/\varepsilon)$ coherent oracle calls against
the standard classical $\Omega(k/\varepsilon^2)$ arm-pull lower bound.
We give a bounded-influence lifting theorem that extends this lower-bound
construction from a base configuration to an exponential family of
configurations. We instantiate the construction on SIR epidemic
intervention, with a stochastic placement-game sanity check, and
machine-check the main results in Lean~4. Code and proofs:
\url{https://github.com/BinRoot/b01t/tree/main/demos/rollout}.
\end{abstract}

\begin{IEEEkeywords}
Quantum computing, quantum algorithms, rollout search, stochastic games,
oracle complexity
\end{IEEEkeywords}

% ============================================================
\section{Introduction}
\label{sec:intro}
% ============================================================

In a classical rollout simulator, action choices are typically drawn
from an internal random-number generator hidden from the caller. A
coherent quantum rollout cannot use such implicit randomness. Its
random choices must live in explicit registers so the rollout step is
unitary and invertible. A total rollout
oracle must turn a selector (a basis-state index naming the rollout's
random choice) into a valid action without measurement or rejection.
When action validity is a bitstring, the selector-to-position map is
totalized coherent \emph{rank-select}. Given the current state and a rank
$r$, the oracle returns the position of the $r$-th valid action, or a
sentinel when $r$ is out of range.

Totalized coherent rank-select is the form the oracularization
barrier of Dunjko et
al.~\cite{dunjko2016quantum} takes for rollout planning under
branch-dependent valid actions. Resolving rank-select, together with
reversible stochastic transition and terminal evaluation, yields a
polynomial-size coherent rollout oracle. Composing the resulting
oracle with the best-arm pipeline of Wang et al.\ gives a
near-quadratic query speedup for the induced planning problem.

Existing approaches split into two camps: assume abstract oracle access
whose implementation cost is left outside the
model~\cite{wang2021quantum,wang2025quantum,montanaro2015quantum},
or require explicit state
enumeration~\cite{ronagh2019quantum,luo2025quantum}. Neither covers the
implicit state spaces where rollout-based planning is the standard
classical tool.

We give a constructive normal form for rollout planning whenever
actions are selected under branch-dependent validity, randomness
lives in registers, transitions are polynomial-size
reversible circuits, and terminal rewards are polynomial-size
predicates. Within this regime, the construction realizes the
coherent rollout oracle required by quantum best-arm pipelines
without assuming abstract environment access.

Our contributions are:
\begin{itemize}
  \item \emph{A coherent rank-select decoder for branch-dependent
    valid actions (Section~\ref{sec:oracle:index}).} We give two
    coherent rank-select constructions, both using $O(w)$ clean
    ancillae: a sequential scan with $O(Nw)$ gates that is optimal
    in the bounded-span model (matching a total-prefix-span lower
    bound), and a blocked construction with $O(N \log w)$ gates that
    exploits long-range connectivity (Theorem~\ref{thm:rs-block}).
    The unconditional gate lower bound across all layouts is
    $\Omega(N)$, and closing the gap to $O(N \log w)$ is open.
    To our knowledge, this is the first reversible-circuit complexity
    analysis of coherent rank-select under branch-dependent validity.
  \item \emph{An explicit polynomial-size coherent rollout oracle
    for implicit-state finite-horizon planning
    (Section~\ref{sec:oracle}).} We decompose the oracle into three
    reversible phases (rank-select indexing, stochastic transition,
    terminal evaluation) that compose into a polynomial-size unitary
    (Theorem~\ref{thm:oracle}), with gate-count and qubit-count
    formulas per call. The oracle is implemented in
    Qiskit with branchwise agreement against classical rollouts on
    all tested instances (Section~\ref{sec:sway}).
  \item \emph{Rollout speedup across exponentially many
    configurations.} Our oracle turns the known best-arm speedup of
    Wang et al.~\cite{wang2021quantum} into an explicit rollout
    algorithm for this planning class, using
    $\widetilde{O}(\sqrt{k}/\varepsilon)$ coherent calls
    (Corollary~\ref{thm:upper}) with per-call circuit costs.
    We also give a bounded-influence lifting theorem
    (Theorem~\ref{thm:lifting}) that extends the standard classical
    $\Omega(k/\varepsilon^2)$ hard construction
    (Proposition~\ref{thm:lower}) from a base configuration to
    exponentially many locally coupled configurations.
    Appendix~\ref{app:decay} shows the lifting applies to local
    subcritical spreading dynamics, and
    Section~\ref{sec:epidemic} instantiates the result on SIR
    epidemic intervention.
\end{itemize}
\smallskip\noindent\textbf{Artifact summary.}
The Lean~4 artifact machine-checks the rank-select bounds, three-phase
oracle cost formulas, classical $\Omega(k/\varepsilon^2)$ bound, and
bounded-influence lifting, with the quantum
$\widetilde{O}(\sqrt{k}/\varepsilon)$ wrapper checked modulo cited
IQAE~\cite{grinko2021iterative} and D\"urr--H\o yer~\cite{durr1996quantum} bounds. The Qiskit artifact implements
rank-select and the three-phase oracle with branchwise validation
against classical rollouts on all tested instances, and
Appendix~\ref{app:artifact} maps theorems to modules. For
Theorem~\ref{thm:oracle}, Lean covers the cost composition and the
Qiskit branchwise tests validate unitarity and ancilla cleanness.

Section~\ref{sec:prelim} reviews the quantum best-arm primitives
used later. Section~\ref{sec:oracle} constructs the oracle.
Section~\ref{sec:lower} proves the classical lower bound.
Section~\ref{sec:upper} gives the quantum upper bound.
Section~\ref{sec:sway} demonstrates the pipeline on an epidemic
intervention model, with a stochastic placement-game sanity check.
Section~\ref{sec:conclusion} concludes.

% ============================================================
\section{Preliminaries}
\label{sec:prelim}
% ============================================================

Our algorithm composes two known primitives: amplitude estimation to
evaluate individual actions, and quantum maximum finding to identify the
best one (Figure~\ref{fig:full-system}). This section defines the
selection problem and reviews both.

\subsection{The Selection Problem}
\label{sec:prelim:selection}

Given $k$ actions whose values are revealed only through stochastic
simulation, the goal is to identify one within $\varepsilon$ of the best
with high confidence. Each action $i$ has an unknown expected value
$v_i$, also called its arm mean in the bandit literature, and a
single rollout returns a noisy observation centered around $v_i$.

\begin{definition}\label{def:eps-correct}
Given $k$ candidate actions with expected values $(v_i)_{i=1}^{k}$, an
algorithm is $(\varepsilon, 2/3)$-correct if its output $\hat{a}$ satisfies
\begin{equation}\label{eq:eps-correct}
  \Pr\!\bigl[\,v_{\hat{a}} \geq \max_j v_j - \varepsilon\,\bigr]
  \;\geq\; \frac{2}{3}\,.
\end{equation}
The threshold $2/3$ is the standard convention~\cite{evendar2006action}
and is boostable to any $1{-}\delta$ by majority vote.
\end{definition}

We measure algorithms by the number of rollout-oracle calls needed to
achieve this guarantee. Section~\ref{sec:lower} gives the corresponding
classical $\Omega(k/\varepsilon^2)$ arm-pull lower bound, and
Section~\ref{sec:upper} applies Wang et al.'s~\cite{wang2021quantum}
quantum upper bound to the oracle constructed in
Section~\ref{sec:oracle}.

\subsection{Coherent Quantum Amplitude Estimation}
\label{sec:prelim:ae}

Quantum amplitude estimation~\cite{brassard2002quantum} estimates
the probability that a designated success qubit of $U$ equals
$\lvert 1\rangle$, to additive error~$\varepsilon$, using
$\widetilde{O}(1/\varepsilon)$ applications of $U$ and $U^\dagger$. The procedure is a single coherent circuit of
controlled powers of a Grover iterate built from $U$, which is why
Section~\ref{sec:oracle} constructs the rollout oracle as an explicit
reversible unitary.

Amplitude estimation requires \emph{unitary} oracle access: the ability
to apply both $U$ and $U^\dagger$. This matches the coherent oracle
model of Wang et al.~\cite{wang2021quantum}, equivalently the strong
quantum oracle of Wang et al.~\cite{wang2025quantum}. The quadratic
speedup depends on being able to query the arms and the randomness of
their rewards in superposition~\cite{buchholz2025quantum}. Under the
weaker channel model, where the oracle's internal randomness is
resampled independently each call, the quadratic speedup no longer
applies~\cite{wang2025quantum,buchholz2025quantum}. Our oracle
provides coherent amplitude access because its randomness lives in
quantum registers we prepare ourselves;
Section~\ref{sec:oracle:transition} gives the construction.

\subsection{Quantum Maximum Finding}
\label{sec:prelim:maxfind}

The final ingredient is quantum maximum
finding~\cite{durr1996quantum}, a Grover-based~\cite{grover1996fast}
search over $k$ actions in $O(\sqrt{k})$ comparisons. Wang et al.~\cite{wang2021quantum}
compose maximum finding with amplitude estimation as the standard
baseline for quantum best-arm identification, and Wang and
Lui~\cite{wang2025quantum} give tight query bounds under oracle access.

\begin{figure}[t]
\centering
\begin{tikzpicture}[
  lbl/.style={font=\footnotesize},
  cost/.style={font=\footnotesize, text=gray!70!black},
  arr/.style={->, >=stealth, line width=0.6mm},
]
  % ── Input: k moves (text only) ──
  \node[lbl, font=\footnotesize\bfseries] at (0, -0.9) {$k$ moves};

  % arrow to Level 1
  \draw[arr] (0.55, -0.9) -- (0.95, -0.9);

  % ── Level 1: Max-Finding ──
  \draw[rounded corners=4pt, line width=0.4mm, gray!60]
    (1.0, 0.0) rectangle (5.6, -2.8);
  \node[lbl, font=\footnotesize\bfseries, anchor=west]
    at (1.15, -0.25) {Max-Finding};
  \node[cost, anchor=east] at (5.45, -0.25) {$O(\sqrt{k})$};

  % ── Level 2: Coherent Amplitude Estimation ──
  \draw[rounded corners=3pt, line width=0.5mm, gray!60]
    (1.25, -0.55) rectangle (5.35, -2.55);
  \node[lbl, font=\footnotesize\bfseries, anchor=west]
    at (1.4, -0.8) {Amplitude Est.};
  \node[cost, anchor=east] at (5.2, -0.8) {$\widetilde{O}(1/\varepsilon)$};

  % ── Level 3: Rollout Oracle (single block) ──
  \draw[fill=gray!10, rounded corners=2pt, line width=0.7mm]
    (1.55, -1.15) rectangle (5.05, -2.25);
  \node[lbl, font=\footnotesize\bfseries] at (3.3, -1.5)
    {Rollout Oracle};
  \node[cost, font=\scriptsize] at (3.3, -1.85)
    {$\mathrm{poly}(N, H, w)$ gates};

  % ── Output: ε-optimal (text only) ──
  \node[lbl, font=\footnotesize\bfseries] at (6.8, -0.9) {$\varepsilon$-optimal};

  % arrow from Level 1 to output
  \draw[arr] (5.65, -0.9) -- (6.1, -0.9);

  % ── Bottom: total cost ──
  \node[font=\footnotesize\bfseries] at (3.3, -3.15)
    {Total: $\widetilde{O}(\sqrt{k}/\varepsilon)$ oracle calls};
\end{tikzpicture}
\caption{The quantum move-selection algorithm as three nested levels.
The outer loop (D\"urr--H\o yer) makes $O(\sqrt{k})$
comparisons; each comparison uses amplitude estimation with
$\widetilde{O}(1/\varepsilon)$
calls to the rollout oracle from Section~\ref{sec:oracle}. Total:
$\widetilde{O}(\sqrt{k}/\varepsilon)$ oracle calls at
$\mathrm{poly}(N, H, w)$ gates each.}
\label{fig:full-system}
\end{figure}
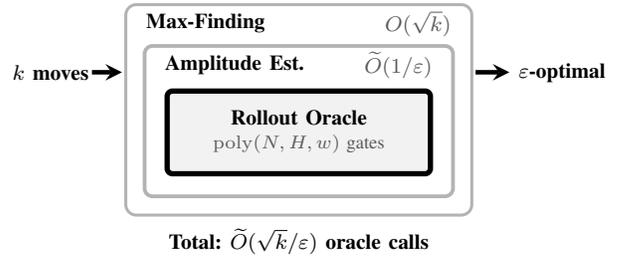

% ============================================================
\section{A Normal Form for Coherent Rollout Oracles}
\label{sec:oracle}
% ============================================================

To provide the innermost level used by amplitude estimation in
Figure~\ref{fig:full-system}, we need an explicit unitary circuit that
simulates a complete rollout. This section constructs one by decomposing
it into three reversible phases that compose into a single oracle.

\subsection{Phase 1: Coherent Rank-Select Indexing}
\label{sec:oracle:index}

Which actions are valid depends on the current state, and in a quantum
rollout the state is in superposition, so the validity predicate is
branch-dependent, and rank-select must produce the correct
result in every branch simultaneously. This is a quantum analogue
of the classical \emph{select} operation on
bitvectors~\cite{jacobson1989space}: given a validity mask, return the
position of the $r$-th set bit, but here the mask is entangled with
the computation rather than fixed.

Formally, let $M_s \in \{0,1\}^N$ be the validity mask in
state~$s$, with $M_s(i)$ denoting whether position
$i \in \{0, \ldots, N{-}1\}$ is valid. Rank-select outputs either
a valid position in $\{0, \ldots, N{-}1\}$ or the sentinel value~$N$,
so the selector, counter, and output-index registers all have width
$w := \lceil\log_2(N{+}1)\rceil$. Given a state register
$\lvert s\rangle$, a rank register $\lvert r\rangle$, and a clean
output register $\lvert 0\rangle^{\otimes w}$, the primitive implements
the total map
\begin{equation}\label{eq:rank-select}
  \lvert s\rangle\lvert r\rangle\lvert 0\rangle^{\otimes w}
  \;\mapsto\;
  \lvert s\rangle\lvert r\rangle\lvert j_s(r)\rangle\,,
\end{equation}
where $j_s(r)$ is the 0-indexed position of the $r$-th 1-bit in $M_s$
when $0 \leq r < \lvert M_s\rvert_1$, and $j_s(r) := N$ otherwise
(sentinel), binary-encoded on $w$~qubits, matching the
compare-and-increment circuit of Figure~\ref{fig:rank-selector}
(rank register initialized to $\ket{0}$). E.g., with $N = 8$ and
$M_s = 01101000$, valid positions are $\{1, 2, 4\}$, so $j_s(0) = 1$,
$j_s(1) = 2$, $j_s(2) = 4$, and $j_s(r) = 8$ for $r \geq 3$.

The sentinel value~$N$ denotes ``no valid action selected''; downstream
transition circuits treat it as a designated no-op branch. The output
register starts clean; the scan first initializes it to the sentinel
value~$N$. If the unique matching position $i$ is found, the
corresponding scan cell replaces the sentinel by $i$. If no match is
found, the sentinel remains. The selector register is
prepared uniformly over all $2^w$ bit strings, so the induced policy
is not uniform over valid actions. In branches with
$\lvert M_s\rvert_1 < 2^w$, the sentinel receives the aliased mass
(Section~\ref{sec:oracle:conditions}).

\smallskip\noindent\textbf{Circuit models.}
The rank-select upper bound is an explicit reversible circuit, but
the lower bounds depend on what kinds of circuits are allowed. We
therefore use two models. The \emph{sequential-scan selector} scans
the $N$-bit validity register left to right, updating a workspace
register with state space $\mathcal{W}$, where $W = |\mathcal{W}|$,
via a reversible step function at each position. Its output is
determined by the final workspace state. The \emph{bounded-fan-in reversible circuit}
is an unrestricted DAG of gates, each acting on at most
$\kappa$~qubits (e.g., $\kappa = 3$ for $\{$CNOT, Toffoli$\}$). After
$G$~gates, each output qubit depends on at most $w + \kappa G$ input
bits, where $w$ is the output width (backward light cone). The
unrestricted model gives the layout-free $\Omega(N)$ input-dependence
bound, against which a blocked construction
(Theorem~\ref{thm:rs-block}) achieves $O(N \log w)$ gates. The
bounded-span scan model captures our reference implementation and
gives the matching $\Omega(Nw)$ gate lower bound.

The rest of this subsection explains why the scan carries the right
information. First, any bounded-fan-in implementation must inspect
linearly many validity bits, giving an unconditional $\Omega(N)$ gate
lower bound. Second, any left-to-right scan must carry enough state
past each prefix to determine how many valid positions have already
been seen. We prove this as a prefix/suffix communication lower
bound. The resulting crossing-gate and workspace bounds show that the
$O(Nw)$ reversible scan is tight in the bounded-span scan model.

\begin{proposition}[Unconditional gate lower bound]\label{prop:rs-gates-lb}
Any bounded-fan-in reversible circuit (fan-in at most $\kappa$)
computing the map of equation~\eqref{eq:rank-select} and returning
ancillae to $\lvert 0\rangle$ requires $\Omega(N)$~gates.
\end{proposition}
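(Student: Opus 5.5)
Our plan is a backward light-cone argument combined with a sensitivity argument: some output bit must depend on $\Omega(N)$ validity bits, and a circuit of $G$ gates of fan-in at most $\kappa$ can make an output qubit depend on at most $w + \kappa G$ input bits, as noted in the circuit-model paragraph. We restrict to classical basis inputs. On basis states a reversible circuit acts as a permutation, and because ancillae are returned to $\lvert 0\rangle$, the output register holds exactly the binary encoding of $j_s(r)$. So it suffices to lower-bound how many input positions the output register depends on.

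\emph{Step 1: a sensitive family of masks.} Take the state register to be (or to contain, via a reversible predicate circuit whose gates we count as well) the mask $M_s$ itself, and fix $r = 0$. Consider the masks $e_i$ that have a single $1$ in position $i$, for $i = 0, \ldots, N-1$, together with the all-zero mask. We have $j_{e_i}(0) = i$ and $j_{0}(0) = N$. For every $i$, flipping bit $i$ of the all-zero mask changes the output from $N$ to $i$. Hence for each position $i$ at least one output qubit changes, so every one of the $N$ validity bits lies in the backward light cone of some output qubit.

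\emph{Step 2: counting.} Let $L_b$ be the backward light cone of output qubit $b$, for $b = 1, \ldots, w$. Step 1 gives $\bigcup_b L_b \supseteq \{0, \ldots, N-1\}$. Each gate touches at most $\kappa$ wires, so it can add at most $\kappa$ new input wires to the union of the cones of the whole output register. Thus $\lvert \bigcup_b L_b \rvert \le w + \kappa G$, and therefore $N \le w + \kappa G$, giving $G \ge (N - w)/\kappa = \Omega(N)$ since $w = O(\log N)$. The union bound is the right quantity here; the per-qubit bound $w + \kappa G$ from the model paragraph would only give $N \le w(w + \kappa G)$, which loses a factor $w$.

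\emph{Main obstacle.} The delicate step is justifying the light-cone union bound for the whole output register rather than for one qubit. It also has to hold in the presence of ancillae and of possibly non-classical gates. We handle this by working on computational basis inputs: we note that a qubit outside the union of the gates' supports traced back from the outputs cannot influence any output, since the gates are applied in topological order. We then add one new input wire to the cone per wire of each gate, at most $\kappa$ per gate. A secondary point is the encoding of $s$. If the mask is computed from $s$ rather than stored directly, the bound applies to the mask-evaluation circuit together with rank-select, so we state the result for the model where $M_s$ is available in a register, as in the sequential-scan model.
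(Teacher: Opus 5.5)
Your proposal is correct and matches the paper's proof: the singleton masks $e_p$ versus the all-zero mask at $r=0$ place every mask bit in the backward light cone of the $w$-qubit output register. The union-of-light-cones count then gives $N \le w + \kappa G$, hence $G \ge (N-w)/\kappa = \Omega(N)$. Your explicit remarks that the union bound (not the per-qubit bound) is needed, and that the mask is assumed to be stored directly in a register, are useful clarifications of points the paper leaves implicit.
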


\begin{proof}
For each input position $p \in \{0, \ldots, N{-}1\}$, the singleton
mask $M = e_p$ (only bit $p$ set) at rank $r = 0$ has $j_s(0) = p$,
while the all-zero mask at $r = 0$ has $j_s(0) = N$. The outputs
differ, so the rank-select map depends on every input mask bit. The
backward light cone of an output qubit grows by at most $\kappa - 1$
new input qubits per gate, so the union of the $w$ output qubits'
light cones covers at most $w + \kappa G$ input qubits after $G$
gates. Hence $w + \kappa G \geq N$, giving $\kappa G \geq N - w =
\Omega(N)$ for $w = O(\log N)$ and constant $\kappa$.
\end{proof}

The main lower-bound tool is a prefix/suffix communication game on a
canonical hard family. For cut position $t \in \{1,\dots,N{-}1\}$, let
\[
  S_t \;:=\; \{\, w \in \mathbb{Z} : \max(0,\, 2t - N) \le w \le t - 1 \,\},
\]
with $|S_t| = \min(t,\, N{-}t)$, and for each $w \in S_t$ let
$P_w \in \{0,1\}^t$ be any prefix of Hamming weight (popcount)~$w$. Writing
$1^{N-t}$ for the all-ones string of length $N{-}t$ and $\Vert$ for
bitstring concatenation, the canonical mask is
$M^{(w)} := P_w \Vert 1^{N-t}$. Since $w + (N{-}t) \ge t$, $M^{(w)}$
has at least $t$ ones, and at rank $r = t{-}1$ the selected position
lies in the suffix at $j_s(r) = 2t - w - 1$.

\begin{lemma}[Rank-select cut lower bound]\label{thm:rs-cut}
For any cut position $t \in \{1, \dots, N{-}1\}$, any deterministic
protocol in which Alice holds the prefix $M[0..t{-}1]$, Bob holds the
suffix $M[t..N{-}1]$ together with the rank $r$, and Bob produces
$j_s(r)$, must communicate at least
$\lceil \log_2 \min(t, N{-}t) \rceil$ bits.
\end{lemma}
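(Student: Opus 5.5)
We reduce to a fooling-set argument on the canonical hard family $\{M^{(w)} : w \in S_t\}$ with rank fixed to $r = t{-}1$. In this family Bob's input is identical for every member: the suffix is always $1^{N-t}$ and the rank is always $t{-}1$. Only Alice's prefix $P_w$ varies. So Bob's output is a function only of the transcript he receives, and a fixed deterministic protocol maps each $w \in S_t$ to a message sequence from which Bob computes $j_s(t{-}1)$.

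\textbf{Step 1 (distinct outputs).} As already noted, $M^{(w)}$ has at least $t$ ones, so rank $t{-}1$ is in range. The prefix contributes exactly $w \le t-1$ ones, so the $(t{-}1)$-th one (0-indexed) is the $(t{-}1-w)$-th one of the suffix. Since the suffix is all ones starting at position $t$, this lies at position $t + (t-1-w) = 2t - w - 1$. We must also check that this position is at most $N-1$. This holds because $w \ge 2t - N$, and this constraint is exactly why $S_t$ has that lower endpoint. The map $w \mapsto 2t-w-1$ is injective, so the $|S_t| = \min(t, N{-}t)$ members of the family have pairwise distinct correct outputs.

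\textbf{Step 2 (counting transcripts).} Suppose the protocol communicates at most $c$ bits in total, counting both directions. Bob's behavior depends only on his fixed input and on the bits received so far. Hence the whole transcript, including Bob's messages, is determined by Alice's messages, which in turn are determined by $P_w$ and by what Bob sent. A deterministic protocol tree of depth $c$ has at most $2^c$ leaves. Bob's output is a function of the leaf, since his private input is constant across the family. Two members with distinct required outputs therefore cannot share a leaf, so $2^c \ge \min(t, N{-}t)$. Because $c$ is an integer, $c \ge \lceil \log_2 \min(t,N{-}t)\rceil$.

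\textbf{Main obstacle.} The only delicate point is making the transcript-counting rigorous for interactive protocols rather than one-way ones. The plan is to use the standard protocol-tree formalism, in which each leaf determines the output once Bob's input is fixed. A secondary check is the index arithmetic and the range condition in Step~1, which justifies the choice of $S_t$. Finally, if one prefers to allow Bob's output to depend on his input, the fooling-set argument still holds, because Bob's input is literally the same for every member of the family.
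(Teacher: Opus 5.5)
Your proposal is correct and follows the paper's proof: you fix Bob's input to the suffix $1^{N-t}$ with rank $r=t-1$, vary Alice's prefix over $\{P_w : w \in S_t\}$, use the pairwise distinct outputs $2t-w-1$, and conclude that at least $|S_t|=\min(t,N-t)$ distinct transcripts are needed. You also spell out two steps the paper leaves implicit: the range check $w \ge 2t-N$, and the protocol-tree count giving $2^c \ge \min(t,N-t)$ and hence $c \ge \lceil \log_2 \min(t,N-t)\rceil$.
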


\begin{proof}
Fix Bob's suffix to $1^{N-t}$ and $r = t{-}1$, and vary Alice's prefix
over $\{P_w : w \in S_t\}$. Bob's input is identical across the
family, so two prefixes yielding the same transcript would yield the
same output. But the outputs $2t - w - 1$ are pairwise distinct
on $S_t$. Hence the reachable transcripts number at least
$|S_t| = \min(t, N{-}t)$.
\end{proof}

\begin{lemma}[Crossing-gate lower bound]\label{cor:rs-crossing}
Let $C$ be a bounded-fan-in reversible circuit (fan-in at most
$\kappa$) computing the map of equation~\eqref{eq:rank-select}, and
let the wires be partitioned into $L \cup R$ for some cut position~$t$
so that $M[0..t{-}1]$ lies on $L$, $M[t..N{-}1]$ and $r$ lie on $R$, and
the output register lies on $R$. Ancillae may be placed on either side.
If $G_\times$ is the number of gates whose support touches both $L$
and $R$, then
\[
  G_\times \;\ge\; \frac{1}{2\kappa}\,\lceil \log_2 \min(t, N{-}t) \rceil.
\]
\end{lemma}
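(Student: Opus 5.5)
We prove this by reduction to Lemma~\ref{thm:rs-cut}: from the circuit $C$ we build a deterministic two-party protocol whose communication is at most $2\kappa G_\times$ bits. Lemma~\ref{thm:rs-cut} then gives $2\kappa G_\times \ge \lceil \log_2 \min(t,N{-}t)\rceil$, which is the claim.

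It suffices to work on classical basis inputs. The hard family in Lemma~\ref{thm:rs-cut} uses basis states: a fixed suffix $1^{N-t}$, rank $r=t{-}1$, and prefixes $P_w$. A reversible circuit over $\{$CNOT, Toffoli$\}$-type gates maps basis states to basis states. Hence every wire carries a definite classical bit at every time step. If general unitaries of fan-in $\kappa$ are allowed, I would restrict to classical reversible gates, which is the model of the reference implementation and of Lean. Alternatively, I would note that the statement is about deterministic protocols and argue via a classical simulation. This is the point I expect to be the main obstacle. Simulating a genuinely quantum crossing gate with bounded classical communication is not possible in general, so the cleanest route is to state the argument for classical reversible (permutation) gates, under which \eqref{eq:rank-select} is computed on basis inputs.

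The simulation works as follows. Alice maintains the current values of all wires in $L$, and Bob maintains those in $R$. Ancillae start at $0$, so both players know their initial values, and the initial wire assignment is exactly Alice's prefix and Bob's suffix and $r$. The players process the gates of $C$ in topological order. A gate supported entirely on $L$ is applied by Alice locally, and a gate supported entirely on $R$ is applied by Bob locally. For a crossing gate with $a$ wires in $L$ and $b$ wires in $R$ (so $a+b\le\kappa$), Alice sends her $a$ current bits to Bob, and Bob sends his $b$ bits to Alice. Both then compute the gate's output on all $a+b$ wires and keep the part on their own side. Each crossing gate costs at most $a+b\le\kappa \le 2\kappa$ bits; the factor $2$ gives slack for a one-directional accounting, in which each side sends its full share.

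At the end, the output register lies on $R$, so Bob reads $j_s(r)$ directly. The total communication is at most $2\kappa G_\times$ bits, and the protocol is deterministic, so Lemma~\ref{thm:rs-cut} applies. Ancilla cleanness is not needed for this direction of the argument.
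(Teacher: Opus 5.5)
Your proposal is correct and is essentially the paper's argument: simulate $C$ as a two-party protocol in which gates on one side are simulated locally for free and each crossing gate costs $O(\kappa)$ bits, then apply Lemma~\ref{thm:rs-cut} to the resulting transcript. The differences are minor. You exchange both sides' bits, costing $a+b\le\kappa$ per crossing gate, where the paper has Alice send $a$ bits and Bob return the $a$ updated bits, costing $2a\le 2\kappa$. Your explicit restriction to classical permutation gates on basis inputs states an assumption that the paper's ``Alice sends those $a$ bits'' step already makes implicitly.
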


\begin{proof}
Alice simulates every gate supported entirely in~$L$ and Bob every
gate supported entirely in~$R$. For a crossing gate with $a$ wires
on Alice's side (and $b \le \kappa - a$ on Bob's side), Alice sends
those $a$ bits to Bob, Bob applies the gate, and Bob returns the
updated $a$ bits if Alice needs those wires for subsequent gates:
at most $2a \le 2\kappa$ communicated bits per crossing gate.
Ancillae carry no input information (they start and end at
$\lvert 0\rangle$), so placing them on either side does not help
either party. The resulting protocol computes Bob's output using at
most $2\kappa\, G_\times$ bits, which Lemma~\ref{thm:rs-cut} forces
to be at least $\lceil \log_2 \min(t, N{-}t) \rceil$.
\end{proof}

Write $\mathrm{pc}_s(p) := \sum_{i \le p} M_s(i)$ for the prefix
count of valid positions in mask~$M_s$ up through position~$p$.

\begin{corollary}[Sequential scan tracks prefix rank at every cut]\label{thm:rs-emergence}
Let $W_t$ denote the workspace state of a sequential-scan selector
after processing positions $0, \dots, t{-}1$ and before processing
position~$t$. For every cut position $t \in \{1, \dots, N{-}1\}$, the
map $w \mapsto W_t$ on the canonical family
$\{M^{(w)} : w \in S_t\}$ is injective, so the workspace register
has at least $\min(t, N{-}t)$ reachable states and therefore carries
at least $\lceil \log_2 \min(t, N{-}t) \rceil$ bits. On this family
the injective parameter is the prefix weight
$w = \mathrm{pc}_{M^{(w)}}(t{-}1)$, so the workspace at the cut
decodes to the processed-prefix count.
\end{corollary}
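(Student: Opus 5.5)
The plan is to obtain the corollary as a one-way specialization of the cut argument in Lemma~\ref{thm:rs-cut}, with the sequential-scan workspace playing the role of the single message from Alice to Bob. Fix a cut $t\in\{1,\dots,N{-}1\}$ and restrict attention to the canonical family $\{M^{(w)} = P_w \Vert 1^{N-t} : w\in S_t\}$ at rank $r=t{-}1$. In the sequential-scan model the workspace after position $t{-}1$, namely $W_t$, is a function of the prefix $M[0..t{-}1]=P_w$ alone (and of the fixed initial workspace, which may include the rank register if the scan carries it). This holds because the step functions at positions $0,\dots,t{-}1$ read only those mask bits and the current workspace. The final output is then a function of $W_t$ together with the suffix bits $M[t..N{-}1]$, which are fixed to $1^{N-t}$ on this family.

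First I will show injectivity. Suppose $w\neq w'$ in $S_t$ but $W_t(M^{(w)}) = W_t(M^{(w')})$. The remaining steps at positions $t,\dots,N{-}1$ process identical suffix bits from an identical workspace, so the final workspaces coincide and the decoded outputs are equal. The map is correct, however, so the outputs are $j(t{-}1)=2t-w-1$ and $2t-w'-1$, which differ. This contradiction gives injectivity, so $W_t$ takes at least $|S_t|=\min(t,N{-}t)$ distinct values. Hence $W\ge\min(t,N{-}t)$, and a register encoding those states needs at least $\lceil\log_2\min(t,N{-}t)\rceil$ bits. Alternatively, the whole argument can be phrased as a one-message protocol in which Alice sends $W_t$, after which Lemma~\ref{thm:rs-cut} applies directly.

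For the decoding claim, I note that by construction $P_w$ has Hamming weight $w$, so $\mathrm{pc}_{M^{(w)}}(t{-}1)=w$. Injectivity then means the inverse map $W_t\mapsto w$ is well defined on the reachable set. That inverse is exactly a decoder from the workspace to the processed-prefix count on this family.

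The only delicate point is justifying that $W_t$ depends only on the prefix and that the output depends only on $(W_t,\text{suffix})$. This is the definition of the sequential-scan model, but I will also handle the case where the rank register is part of the workspace. In that case I fix $r=t{-}1$ throughout, so it contributes no variation across the family, and the argument is unchanged.
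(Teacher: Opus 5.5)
Your proposal is correct and follows essentially the paper's route. In both, $W_t$ is the one-way message from Alice, who runs the scan on the prefix, to Bob, who finishes it on the fixed suffix $1^{N-t}$ at rank $t{-}1$, and the pairwise-distinct outputs $2t-w-1$ on the canonical family force distinct messages. The only difference is ordering: you prove injectivity directly by the fooling argument and then deduce at least $\min(t,N{-}t)$ reachable states. The paper instead cites the bit bound of Lemma~\ref{thm:rs-cut} and reads injectivity off it, a step that really rests on the lemma's proof rather than its statement, so your ordering is the cleaner one.
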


\begin{proof}
The workspace $W_t$ is the only channel across the time cut from the
processed prefix to the future computation. Alice runs the scan on
her prefix, sends $W_t$ to Bob, and Bob continues the scan;
Lemma~\ref{thm:rs-cut} forces this protocol to use at least
$\lceil \log_2 \min(t, N{-}t) \rceil$ bits, so the workspace has
$\ge \min(t, N{-}t)$ reachable states. The $\min(t, N{-}t)$ canonical
inputs $\{M^{(w)} : w \in S_t\}$ therefore map injectively into
$W_t$, with injective parameter the processed-prefix count
$w = \mathrm{pc}_{M^{(w)}}(t{-}1)$.
\end{proof}

Setting $t = \lfloor N/2 \rfloor$ in Corollary~\ref{thm:rs-emergence}
gives $\min(t, N{-}t) = \lfloor N/2 \rfloor$, so any sequential-scan
selector needs a workspace of $\Omega(\log N)$ bits. We call a
circuit \emph{bounded-span} with span $\sigma$ if each gate's support
crosses at most $\sigma$ prefix/suffix cuts in the ordered mask
layout. Summing Lemma~\ref{cor:rs-crossing} over prefix cuts
gives a matching $\Omega(Nw)$ gate lower bound in this model
(Theorem~\ref{cor:rs-tight} below).

A matching sequential-scan construction realizes these bounds.
The circuit scans all $N$~positions, maintaining a running rank
counter. At each position, it checks whether the rank matches the
desired index and the position is valid; if both hold, it flags that
position as selected. After the scan, the rank counter is uncomputed
and reused (Figure~\ref{fig:rank-selector}). The cost is $O(Nw)$~gates
plus the cost of evaluating the validity predicate at each position,
and $O(w)$~ancilla qubits, where $w = \lceil \log_2(N{+}1) \rceil$ is
the counter width.

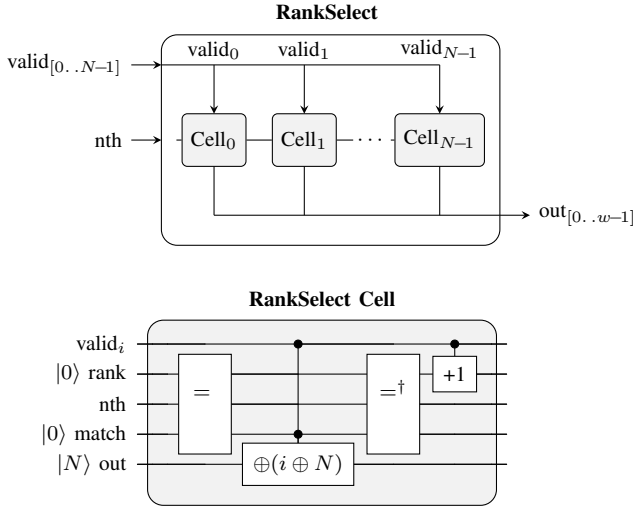
\begin{figure}[!t]

% ── Figure 2a: the full RankSelect composed from N cells ──
\makebox[\columnwidth][c]{\begin{tikzpicture}[
  cell/.style={draw, fill=gray!10, minimum width=0.7cm, minimum height=0.7cm,
               rounded corners=2pt, font=\footnotesize},
  lbl/.style={font=\footnotesize},
  arr/.style={->, >=stealth},
]
  % Cells
  \node[cell] (c0) at (0,0) {Cell$_0$};
  \node[cell] (c1) at (1.2,0) {Cell$_1$};
  \node[lbl]  (dots) at (2.1,0) {$\cdots$};
  \node[cell] (cn) at (3.0,0) {Cell$_{N\!-\!1}$};

  % Threading wire: nth through cells
  \draw (-0.5, 0) -- (c0.west);
  \draw (c0.east) -- (c1.west);
  \draw (c1.east) -- (1.85, 0);
  \draw (2.35, 0) -- (cn.west);
  \draw (cn.east) -- (3.6, 0);

  % Per-cell inputs from top (v_i fan-out)
  \foreach \x/\lab in {0/$v_0$, 1.2/$v_1$, 3.0/$v_{N\!-\!1}$} {
    \draw[arr] (\x, 1.0) -- (\x, 0.35);
  }
  \node[lbl] at (0, 1.2) {$\text{valid}_0$};
  \node[lbl] at (1.2, 1.2) {$\text{valid}_1$};
  \node[lbl] at (3.0, 1.2) {$\text{valid}_{N\!-\!1}$};

  % Per-cell outputs drop to gather bus (no arrowheads)
  \foreach \x in {0, 1.2, 3.0} {
    \draw (\x, -0.35) -- (\x, -1.0);
  }

  % Encapsulating RankSelect box
  \draw[rounded corners=6pt] (-0.7,-1.4) rectangle (3.8,1.4);
  \node[font=\footnotesize\bfseries] at (1.5,1.7) {RankSelect};

  % - External inputs on the left -
  % v[0..N-1]: short arrow into left side, then fan-out line across top
  \draw[arr] (-1.1, 1.0) -- (-0.7, 1.0);
  \node[lbl, left] at (-1.1, 1.0) {$\text{valid}_{[0\mathinner{.\,.}N\!-\!1]}$};
  % horizontal fan-out bus along top inside box
  \draw (-0.7, 1.0) -- (3.0, 1.0);

  % nth: short arrow into left side at the threading wire level
  \draw[arr] (-1.1, 0) -- (-0.7, 0);
  \node[lbl, left] at (-1.1, 0) {nth};

  % - External output on the right -
  % shared w-qubit output bus along bottom, then short arrow out right
  \draw (0, -1.0) -- (3.8, -1.0);
  \draw[arr] (3.8, -1.0) -- (4.2, -1.0);
  \node[lbl, right] at (4.2, -1.0) {$\text{out}_{[0\mathinner{.\,.}w\!-\!1]}$};
\end{tikzpicture}}

\vspace{14pt}

% ── Figure 2b: one cell detail ──
\tikzset{cellbox/.style={rounded corners=6pt,draw,thin,
  append after command={
    \pgfextra{\begin{pgfonlayer}{background}
      \fill[gray!10,rounded corners=6pt]
        (\tikzlastnode.south west) rectangle (\tikzlastnode.north east);
    \end{pgfonlayer}}
  }
}}
\tikzset{celllbl/.style={label position=above,anchor=south,yshift=-0.15cm,font=\footnotesize\bfseries}}
\makebox[\columnwidth][c]{\begin{quantikz}[font=\footnotesize, row sep={0.4cm,between origins}, column sep=0.2cm, thin lines]
  \lstick{$\text{valid}_i$}          & \qw & \qw \gategroup[5,steps=5,style=cellbox,label style=celllbl]{RankSelect Cell} & \qw & \ctrl{4} & \qw & \ctrl{1} & \qw & \qw \rstick{\phantom{$\text{valid}_i$}} \\
  \lstick{$\ket{0}$ rank}            & \qw & \qw & \gate[3,style={minimum width=0.7cm},label style={yshift=0.2cm}]{=\vphantom{\ensuremath{^\dag}}}   & \qw      & \gate[3,style={minimum width=0.7cm},label style={yshift=0.2cm}]{=^\dag} & \gate[1]{\text{+1}} & \qw & \qw \rstick{\phantom{$\ket{0}$ rank}} \\
  \lstick{nth}                       & \qw & \qw &                                          & \qw      &                                             & \qw & \qw & \qw \rstick{\phantom{nth}} \\
  \lstick{$\ket{0}$ match}           & \qw & \qw &                                          & \ctrl{1} &                                             & \qw & \qw & \qw \rstick{\phantom{$\ket{0}$ match}} \\
  \lstick{$\ket{N}$ out} & \qw & \qw & \qw                                      & \gate{\oplus (i \oplus N)} & \qw                                  & \qw & \qw & \qw \rstick{\phantom{$\ket{N}$ out}}
\end{quantikz}}

\caption{\textbf{Top:}~the RankSelect circuit chains $N$~per-position
blocks (``Cell$_i$''), taking inputs
$\text{valid}_{[0\mathinner{.\,.}N\!-\!1]}$ and nth, and producing
a binary-encoded position index in a shared $w$-qubit output
register $\text{out}_{[0\mathinner{.\,.}w\!-\!1]}$.
\textbf{Bottom:}~detail of one block, applied at position~$i$.
The output register is preloaded with the binary encoding of the
sentinel value~$N$. On a match at position $i$ (rank $=$ nth and
$\text{valid}_i = 1$), the circuit replaces the sentinel by $i$ in
$\text{out}$ by XORing the constant $i \oplus N$ into $\text{out}$,
using $O(w)$ multi-target CNOTs on its 1-bits. Rank increments on each
valid position.}
\label{fig:rank-selector}
\end{figure}

\begin{theorem}[Sequential-Scan Rank-Select: Upper Bound]\label{thm:rank-select}
Given an $N$-qubit validity register (possibly entangled with the
computation) and a $w$-qubit index register, the sequential-scan
rank-select circuit selects the $r$-th valid item using $O(Nw)$~gates, $O(Nw)$~depth
(the scan is inherently sequential, and the ripple-carry increment
contributes $O(w)$~depth per step), and $O(w)$~ancilla qubits, with
all ancillae fully uncomputed and reusable.
\end{theorem}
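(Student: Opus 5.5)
We prove the theorem directly from the cell structure in Figure~\ref{fig:rank-selector}, in three steps: correctness via a loop invariant, gate and depth counts, and ancilla cleanness.

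\textbf{Correctness.} Fix a computational-basis branch $\lvert s\rangle\lvert r\rangle$. By linearity it suffices to verify the map of equation~\eqref{eq:rank-select} on each basis state. The construction first XORs the constant $N$ into the clean output register, using at most $w$ NOT gates. It then applies Cell$_0,\dots,$Cell$_{N-1}$ in order. We carry the following invariant. Before Cell$_i$, the rank ancilla holds $\mathrm{pc}_s(i{-}1)$, the match ancilla holds $0$, and the output register holds $N$ if no earlier position $p<i$ satisfied $M_s(p)=1$ and $\mathrm{pc}_s(p{-}1)=r$; otherwise it holds that unique $p$. Cell$_i$ acts in four stages.
\begin{enumerate}
\item It computes $[\mathrm{rank}=r]$ into match with a reversible $w$-bit equality tester (CNOTs plus a multi-controlled Toffoli).
\item Controlled on $\mathrm{valid}_i\wedge\mathrm{match}$, it XORs the constant $i\oplus N$ into the output register.
\item It uncomputes match with the inverse equality tester. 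This is possible because rank and $r$ are unchanged by stage 2.
\item Controlled on $\mathrm{valid}_i$, it increments rank.
\end{enumerate}
Uniqueness of the match is the key point. Since $\mathrm{pc}_s(p{-}1)$ is nondecreasing, and strictly increases after each valid position, at most one valid $p$ has $\mathrm{pc}_s(p{-}1)=r$. That $p$ exists exactly when $r<\lvert M_s\rvert_1$, and it is the $r$-th valid position. So the output is XORed at most once, turning $N$ into $N\oplus(i\oplus N)=i$. If $r\ge\lvert M_s\rvert_1$, the output stays at $N$. In both cases the final output equals $j_s(r)$.

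\textbf{Costs.} Each cell uses the following gates:
\begin{enumerate}
\item An equality test and its inverse, each $O(w)$ gates in depth $O(\log w)$ or $O(w)$. The multi-controlled AND is decomposed into a Toffoli ladder using one or two extra borrowed ancillae, or a clean $O(w)$ ancilla chain that is reused across cells.
\item A doubly-controlled XOR of a constant, which is at most $w$ Toffolis. These can share a single AND ancilla computed from $\mathrm{valid}_i$ and match, then uncomputed.
\item A controlled ripple-carry increment on $w$ bits, which is $O(w)$ gates and $O(w)$ depth (for example, the Cuccaro-style adder with one carry ancilla).
\end{enumerate}
Each cell is therefore $O(w)$ gates and $O(w)$ depth. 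Summing over the $N$ cells gives $O(Nw)$ gates and $O(Nw)$ depth. Ancillae consist of rank ($w$ qubits), match (1 qubit), and an $O(w)$ scratch pool that is returned clean after every cell, for $O(w)$ total.

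\textbf{Uncomputation of rank.} After the last cell, rank holds $\lvert M_s\rvert_1$, which depends on $s$. It cannot be cleared by a constant. We erase it by running the $N$ controlled increments again in reverse order as controlled decrements, conditioned on the same $\mathrm{valid}_i$ bits. These bits remain intact, since validity qubits are only ever used as controls. This adds $O(Nw)$ gates and depth, so the bounds are unchanged. The match and AND ancillae are already clean at the end of each cell. Every ancilla therefore returns to $\lvert 0\rangle$ in every branch, and the circuit acts correctly on superpositions, including validity registers entangled with $s$.

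\textbf{Main obstacle.} The step needing the most care is this global rank uncomputation together with the claim that the XOR into the output fires exactly once. Both rely on the monotonicity of the prefix count and on the validity bits being read-only. I would check the off-by-one convention carefully: the match test must compare $r$ against the count before the increment at position $i$, which is why the equality test precedes the increment inside the cell.
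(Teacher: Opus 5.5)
Your proposal is correct and follows the paper's proof. It uses the same per-position cell (a $w$-bit equality test, an $O(w)$-gate controlled constant-XOR into the sentinel-preloaded output, uncomputation of the comparison, and a controlled ripple-carry increment), each at $O(w)$ gates and $O(w)$ depth, summed over $N$ sequential positions. You go beyond the paper's short cost-only proof by giving the loop invariant and the uniqueness-of-match argument for correctness, and by explicitly erasing the final counter value $\lvert M_s\rvert_1$ with a reverse pass of controlled decrements, a step the paper only mentions in passing (``the rank counter is uncomputed and reused'').
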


\begin{proof}
At each scan position, the circuit performs a $w$-bit comparison, an
$O(w)$-controlled output-update stage, uncomputes the comparison
scratch, and conditionally increments the $w$-bit prefix counter via
a ripple-carry adder~\cite{cuccaro2004new}. Each stage costs $O(w)$ bounded-fan-in
reversible gates, so each scan position costs $O(w)$ gates and the
total over $N$ positions is $O(Nw)$. The steps are sequential, and
within each step the ripple-carry chain gives $O(w)$ depth, so total
depth is $O(Nw)$.
\end{proof}

A decoder applies the action at the indexed position (e.g., a
binary-controlled CNOT fan-out onto the configuration register).
A one-hot output on $N$~qubits is recoverable as a separate
implementation detail when the caller needs it. The ancilla layout
of the scan itself is $w$~prefix-counter qubits, $w$~equality-check
qubits, one match flag, and $w$~output qubits, for $O(w)$ total,
all uncomputed by reversing the scan.

\begin{theorem}[Sequential-Scan Rank-Select: Bounded-Span Optimality]\label{cor:rs-tight}
In the bit-level bounded-span bounded-fan-in model (fan-in $\le f$,
per-gate cut span $\le \sigma$, both constant), the sequential-scan
selector of Theorem~\ref{thm:rank-select} is gate-optimal up to
constants: gate count $\Theta(Nw)$, per-cut communication
$\Theta(\log \min(t, N{-}t))$ bits, and workspace $\Theta(\log N)$.
\end{theorem}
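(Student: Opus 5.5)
The plan has two halves. The upper bounds are quoted from Theorem~\ref{thm:rank-select}; the matching lower bounds come from Lemma~\ref{cor:rs-crossing} and Corollary~\ref{thm:rs-emergence}, with the gate-count bound obtained by summing over all prefix cuts.

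\emph{Upper bounds.} Theorem~\ref{thm:rank-select} gives $O(Nw)$ gates and $O(w)=O(\log N)$ workspace. The scan is bounded-span provided each gate touches only the current mask bit together with the $O(w)$ workspace wires. To make this hold, I lay the workspace wires out adjacent to the current scan position, or equivalently charge per-gate span as the number of mask cuts separating its supported mask bits. With this layout each gate crosses $O(1)$ cuts. Per-cut communication is $O(w)=O(\log N)$ bits, because only the counter, match flag and output register carry information past a cut. This matches $\log\min(t,N-t)$ up to constants for cuts with $\min(t,N-t)\ge N^{c}$. I will state the $\Theta$ claim for the per-cut bound as holding for such cuts, in particular the middle cut. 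For extreme cuts the lower bound is the one from Lemma~\ref{thm:rs-cut} and the upper bound is $O(\log N)$.

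\emph{Lower bounds.} The workspace bound follows from Corollary~\ref{thm:rs-emergence} at $t=\lfloor N/2\rfloor$, which gives $\Omega(\log N)$ bits. The per-cut bound is Lemma~\ref{thm:rs-cut} together with Lemma~\ref{cor:rs-crossing}. For the gate count, let $G_\times(t)$ be the number of gates crossing cut $t$. Lemma~\ref{cor:rs-crossing} applies at every cut, taking as $L$ the wires on the prefix side and placing output, rank and ancillae on $R$; it gives
\[
  G_\times(t)\ \ge\ \frac{1}{2f}\,\lceil\log_2\min(t,N-t)\rceil .
\]
Each gate crosses at most $\sigma$ cuts, so
\[
  \sigma G\ \ge\ \sum_{t=1}^{N-1} G_\times(t)\ \ge\ \frac{1}{2f}\sum_{t=N/4}^{3N/4}\log_2(N/4)\ =\ \Omega(N\log N)\ =\ \Omega(Nw),
\]
and hence $G=\Omega(Nw/(f\sigma))=\Omega(Nw)$.

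\emph{Main obstacle.} The hard step is making precise how non-mask wires (rank, output, ancillae) sit in the ordered layout, so that ``crossing cut $t$'' in the span definition agrees with the $L/R$ partition used in Lemma~\ref{cor:rs-crossing}. My first attempt is to put $r$, the output and all ancillae to the right of $M[N-1]$, so every cut partitions the wires as that lemma requires. Under this layout, however, the scan's gates touching mask bit $i$ and the workspace cross about $N-i$ cuts, so the upper bound would fail. I would therefore instead define span purely by the mask positions a gate touches, counting a gate on mask bit $i$ and workspace as crossing cut $t$ iff the workspace at that moment is ``on the other side'' of $i$, i.e.\ a time-indexed placement. To justify that this keeps Lemma~\ref{cor:rs-crossing} valid, I would argue that the Alice/Bob simulation only needs a wire assignment at each time step, with migrating workspace wires charged as crossings. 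Under that accounting the scan crosses each cut $O(w)$ times, which matches the lower bound exactly.
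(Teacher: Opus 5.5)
Your proposal is correct and follows the paper's proof: both take the upper bounds from Theorem~\ref{thm:rank-select}, use Corollary~\ref{thm:rs-emergence} for the workspace bound, and obtain the $\Omega(Nw)$ gate bound by summing Lemma~\ref{cor:rs-crossing} over a linear-size window of middle cuts and dividing by the span $\sigma$ (the paper uses $t\in\{\lfloor N/4\rfloor,\dots,\lfloor N/2\rfloor-1\}$). Two of your additions address points the paper's proof leaves implicit: letting the workspace migrate with the scan so that it is genuinely bounded-span, and restricting the per-cut $\Theta$ claim to cuts with $\min(t,N-t)\ge N^{c}$. Just make sure each migration step is counted as a span-one gate (a SWAP) so that $\sum_t G_\times(t)\le\sigma G$ still holds, and note that the rank register then starts on the prefix side, which is harmless because Lemma~\ref{thm:rs-cut} fixes $r=t-1$ on its hard family.
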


\begin{proof}
The upper bounds are Theorem~\ref{thm:rank-select} and
Corollary~\ref{thm:rs-emergence}. For the matching gate lower bound,
fix a bit-level bounded-fan-in circuit with fan-in at most~$f$ and
cut span at most~$\sigma$. Each crossing gate contributes at most
$2f$ communicated bits at every cut it spans, so
$\mathrm{bits}(t) \le 2f \cdot \chi(t)$, where $\chi(t)$ counts gates
crossing cut~$t$, and the total crossing mass satisfies
$\sum_t \chi(t) \le \sigma \cdot G$. On the window
$t \in \{\lfloor N/4\rfloor, \dots, \lfloor N/2\rfloor{-}1\}$ we have
$\min(t, N{-}t) \ge \lfloor N/4 \rfloor$ and hence
$\lceil \log_2 \min(t, N{-}t) \rceil \ge w - 3$, so summing
Lemma~\ref{cor:rs-crossing} over those $\lfloor N/4 \rfloor$ cuts
and chaining the two cut-accounting inequalities yields
\[
  2f\sigma \cdot G \;\ge\; \lfloor N/4 \rfloor\,(w - 3),
  \qquad\text{i.e. } G = \Omega\!\left(\tfrac{Nw}{f\sigma}\right).
\]
For geometrically 1D-local gates whose supports lie in a contiguous
block of at most~$\kappa$ adjacent wires, $f = \kappa$ and
$\sigma \le \kappa - 1$, giving $G = \Omega(Nw/\kappa^2) = \Omega(Nw)$
for constant~$\kappa$. Bounded span is needed: otherwise one
long-range two-wire gate could cross many prefix cuts at once,
and summing per-cut bounds would overcount the same gate.
\end{proof}

The $\Omega(Nw)$ bound is layout-sensitive. It is a total prefix-span
lower bound that the cut-summing argument converts into a gate count
only when each gate's support spans $O(1)$ cuts; a single long-range
gate can cross many prefix/suffix cuts at once and would be
overcounted by that sum. Removing the span restriction strictly
reduces the gate count, as the next theorem shows.

\begin{theorem}[Blocked Rank-Select: $O(N \log w)$ Construction]\label{thm:rs-block}
Let $w = \lceil \log_2(N{+}1) \rceil$. In the bounded-fan-in reversible
circuit model with no span restriction (long-range two-qubit gates
permitted), totalized coherent rank-select admits a reversible
implementation using $O(w)$ clean ancillae and $O(N \log w)$ gates.
This is consistent with the bounded-span $\Omega(Nw)$ gate bound of
Theorem~\ref{cor:rs-tight}, which depends on each gate's support
spanning $O(1)$ prefix/suffix cuts.
\end{theorem}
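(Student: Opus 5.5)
The plan is to cut the $w$-bit work per position in Theorem~\ref{thm:rank-select} down to $O(\log w)$ per position. The full-width operations (counter update, rank comparison, writing the high address bits) are done once per block of $B$ positions rather than once per position, with $B$ chosen so that per-block overhead $O(w)$ is amortized to $O(1)$ per position.

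\textbf{Blocking and cost target.} Fix $B := 2^{\lceil \log_2 w\rceil}$, so $B = \Theta(w)$ and $\log B = O(\log w)$. Split the mask into $\lceil N/B\rceil$ aligned blocks $\beta_m = \{mB,\dots,mB+B-1\}$; if $N$ is not a multiple of $B$, pad the last block with constant-$0$ validity bits. Because blocks are aligned to powers of two, the position $mB+u$ with $0\le u<B$ has binary encoding with $m$ in the high $w-\log B$ bits and $u$ in the low $\log B$ bits. The target is $O(w + B\log B)=O(w\log w)$ gates per block and $O(w)$ reusable ancillae. Over $N/B$ blocks this gives $O((N/w)\cdot w\log w)=O(N\log w)$ gates.

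\textbf{Per-block procedure.} The global state is a $w$-bit prefix counter $c$, equal to the number of valid positions before the current block; the output register is preloaded with $N$ as in Figure~\ref{fig:rank-selector}. For block $\beta_m$:

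1. Compute the block popcount $b_m$ into a fresh $(\log B{+}1)$-bit register by a local ripple-increment scan. This costs $O(B\log B)$ gates.
2. Compute $d := r - c$ into a $w$-bit scratch register with a Cuccaro adder~\cite{cuccaro2004new}, and then compute a flag $\mathrm{in}_m := [\,0\le d < b_m\,]$ by an $O(w)$ comparison. Here $\mathrm{in}_m=1$ exactly when the $r$-th valid bit lies in $\beta_m$.
3. Controlled on $\mathrm{in}_m$, XOR the constant $(mB)\oplus N$ into the output with $O(w)$ CNOTs. This replaces the sentinel by the block base address.
4. Run a local rank-select over the $B$ positions of the block, using a $(\log B{+}1)$-bit local counter and comparing against the low $\log B$ bits of $d$. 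The comparison is valid because $d<b_m\le B$ whenever $\mathrm{in}_m=1$. On a match at local offset $u$, controlled on $\mathrm{in}_m$, XOR $u$ into the low $\log B$ output bits. This costs $O(\log B)$ gates per position. Then uncompute the local counter and match flags by running the local scan in reverse.
5. Uncompute $\mathrm{in}_m$ and $d$, add $b_m$ into $c$ with an $O(w)$ adder, and uncompute $b_m$ by reversing step 1.

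Each step is $O(w)$ or $O(B\log B)$, and the ancillae live are $c$, $d$, the flag, and the $O(\log w)$ local registers, so the workspace is $O(w)$. After the last block, $c$ holds the total popcount. It is cleaned by re-running the per-block popcount-and-add sweep in reverse, touching only $c$ and never the output, which at most doubles the gate count. Long-range gates are used essentially, because $c$, $d$ and the output register interact with every block. This is exactly what escapes the bounded-span bound of Theorem~\ref{cor:rs-tight}.

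\textbf{Main obstacle: correctness of the output write.} I must show the output is written in exactly one branch-consistent way.
\begin{itemize}
\item \emph{At most one block fires.} The intervals $[c_m, c_m+b_m)$ partition $[0,|M_s|_1)$, so $\mathrm{in}_m=1$ for at most one $m$, and for exactly one $m$ iff $r<|M_s|_1$.
\item \emph{Exactly one offset fires within that block.} Local rank-select matches exactly one offset $u$, namely the $d$-th valid bit of the block.
\item \emph{The two XORs compose correctly.} The output becomes $N\oplus(mB\oplus N)\oplus u = mB \,|\, u = mB+u$, using that $mB$ has zero low bits.
\item \emph{Otherwise the sentinel survives.} If no block fires, the output remains $N$.
\end{itemize}
The padded last block and the sentinel encoding are where off-by-one errors are most likely. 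I would check them explicitly against the example $N=8$, $M_s=01101000$, and I would verify cleanness by noting that every scratch register is produced by a compute/uncompute pair whose controls are unchanged in between.
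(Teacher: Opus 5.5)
Your proposal is correct and is essentially the paper's construction: $\Theta(w)$-size blocks, a per-block popcount, a take flag from an $O(w)$ comparison against the running prefix count, an inner scan on $O(\log w)$-bit local registers, one conditional write per block, and a final reverse accumulation pass to clear the prefix counter, giving $O(w\log w)$ gates per block and $O(w)$ ancillae. The one real variation is that you align blocks to a power of two and XOR the offset $u$ directly into the low $\log B$ output bits during the inner scan, instead of materializing the local index $\ell$ and copying it out. This keeps the per-cell cost at $O(\log w)$ provided your reversed inner scan undoes only the local counter and match flags and omits those output writes, and provided you keep the adder's borrow bit so that the test $0\le d$ is meaningful.
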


\begin{proof}
The idea is to split the $N$ positions into $L = \Theta(N/w)$ blocks
of size $B = \Theta(w)$. For each block in order, we compute whether
the $r$-th valid bit lies in this block ($\texttt{take}_q$) and
where within the block (a local index $\ell$). Only the firing block
overwrites a sentinel-preloaded output, and all scratch is
uncomputed before moving on, so ancillae are reused across blocks.

Initialize $p := 0$ and preload the output with $N$. For each block
$q = 0, \dots, L{-}1$:
\begin{enumerate}
  \item \emph{Block popcount.} Compute $c_q$, the number of valid
    bits in block~$q$, via a balanced reversible add-tree
    \cite{haner2018optimizing}, $O(w \log w)$ gates.
  \item \emph{Take flag and local rank.} Set
    $\texttt{take}_q := [\,p \le r < p + c_q\,]$ and the local rank
    $\rho := r - p$ in an $O(\log w)$-bit register, $O(w)$ gates.
  \item \emph{Inner scan.} Apply Theorem~\ref{thm:rank-select} to
    the $B$-bit block at rank $\rho$, producing a local index
    $\ell$ in an $O(\log w)$-bit register, $O(w \log w)$ gates.
  \item \emph{Conditional write.} Controlled on $\texttt{take}_q$,
    XOR $(qB + \ell) \oplus N$ into the output, $O(w)$ gates.
  \item \emph{Cleanup.} Update $p \leftarrow p + c_q$, then
    uncompute steps~1--3.
\end{enumerate}

\noindent\emph{Correctness.} Exactly one $\texttt{take}_q$ fires
when $0 \le r < |M_s|_1$, replacing the sentinel with that block's
$qB + \ell$. For out-of-range $r$, no block fires and the sentinel
stays. A reverse accumulation pass clears $p$ at the same asymptotic
cost.

\noindent\emph{Cost.} Each block costs $O(w \log w)$, dominated by
steps~1 and~3. Summed over $L = \Theta(N/w)$ blocks, the total is
$O(N \log w)$. Live ancillae at any moment are $p$, the inner scan's
workspace, and the current block's $O(\log w)$ registers, all reused
across blocks, for $O(w)$ total.
\end{proof}

\smallskip\noindent\textbf{Local index sizing.}
Step~(c) writes into an $O(\log w)$-bit local register on purpose.
Writing the inner scan directly into the global $w$-bit output
register (gated on $\texttt{take}_q$) would cost $O(w)$
controlled-XOR gates per cell and $O(Bw) = O(w^2)$ per block,
recovering the scan's $O(Nw)$ cost. Routing through the small local
register and gating only the final $O(w)$-gate write keeps the
per-block cost at $O(w \log w)$.

\smallskip\noindent\textbf{Hardware regimes.}
The scan is preferred when long-range gates require routing. The
blocked construction is preferred when all-to-all gates are cheap.
Both must pay the same $\Omega(Nw)$ total prefix-span (summed from
Lemma~\ref{thm:rs-cut}). The scan pays it as $Nw$ short gates, and
the blocked construction pays it as $N\log w$ long-range gates.

Corollary~\ref{thm:rs-emergence} shows that any sequential-scan
implementation of branch-dependent rank-select must carry enough
information across each prefix/suffix cut to recover the
processed-prefix count on a simple family of masks with different
prefix weights.
Rank-select is used only in the simulation layer, where it decodes
selector registers into valid actions. The near-quadratic speedup
does not come from rank-select itself. It comes from applying
amplitude estimation and quantum maximum finding to the arm values
exposed by the rollout oracle.

Selecting a \emph{specific} valid item rather than \emph{any} valid
item ensures that a fixed random seed yields a deterministic
branchwise move choice, which is required for clean composition inside
the reversible rollout oracle.

To our knowledge, prior work does not provide a reversible circuit for
rank-selection under a branch-dependent validity predicate, one where
the valid set is entangled with the computation and differs across
branches, with this interface and cost accounting. The closest relatives
solve different problems: LCU SELECT
oracles~\cite{babbush2018encoding} select the $r$-th unitary from a
\emph{fixed classical} list; quantum popcount circuits count valid items
but do not locate the $r$-th one; Grover search returns \emph{some}
valid item rather than the indexed $r$-th. A precomputed index of all valid positions
would require $O(HN\log N)$ persistent qubits; the rank-select scan
uses only $O(\log N)$ reusable ancillae.
The primitive applies wherever a quantum algorithm must coherently
select the $r$-th item satisfying a branch-dependent predicate,
without materializing the full satisfying set. The only requirement
is that the validity predicate be evaluable as a single-qubit flag
per position using a polynomial-size reversible circuit.

\subsection{Phase 2: Reversible Stochastic Transition}
\label{sec:oracle:transition}

Each rollout round involves random events. We store the randomness in
explicit quantum registers that persist across the computation, so
the transition is reversible by construction and amplitude estimation
can apply both $U$ and $U^\dagger$.

For each cell, the update reads the states of neighboring cells,
computes a local threshold from the neighborhood, compares against an
explicit randomness register, and conditionally updates the cell's
state. All intermediates are uncomputed. The per-cell cost is $O(1)$, and
the full $N$-cell transition is $O(N)$~gates for bounded-neighborhood
dynamics, where each cell's update reads only its graph neighbors.

Because the dice registers are part of the unitary input, the
transition is reversible by construction. After computing the local
thresholds, comparisons, and conditional updates, the circuit
uncomputes all intermediate data and leaves only the next
configuration and the dice registers.

\subsection{Phase 3: Coherent Terminal Evaluation}
\label{sec:oracle:eval}

The final phase computes a binary payoff from the terminal state and
writes it into a single payoff qubit. Under a Bernoulli reward model,
the evaluation circuit computes intermediate quantities (e.g.,
feature counts) into ancilla registers, applies a reversible
predicate (e.g., a majority or threshold check) to flip the payoff
qubit, then uncomputes all intermediates. Only the payoff qubit
persists, with $\lvert 1\rangle$-probability equal to the action-$i$
win probability $v_i$ that quantum best-arm methods estimate.

Any polynomial-size reversible predicate works as the evaluation.
Local predicates (counting cells of a given type, comparing two
counts) cost $O(N)$, but non-local computations like connectivity
checks are also valid at $O(\mathrm{poly}(N))$ cost without affecting
the query complexity. When the predicate compares two $w$-bit counts
(e.g., majority), a naive implementation checks all $(N{+}1)^2$ pairs,
producing the dominant term in the cost formula.

\subsection{Conditions for Oracle Realizability}
\label{sec:oracle:conditions}

When all three phases are polynomial-size reversible, they compose into a
polynomial-size oracle. We now state this precisely, beginning with the
register layout.

\smallskip\noindent\textbf{Registers.}
Let $H$ denote the rollout horizon, $N$ the number of candidate
action positions per round, and $w = \lceil\log_2(N{+}1)\rceil$ the
selector/index width. The oracle acts on five register groups:
\emph{configuration} (per-round $N$-cell state copies, retained for
invertibility); \emph{selector} ($H{\times}w$ qubits prepared
uniformly over all $2^w$ strings, with out-of-range ranks routed to
the sentinel~$N$ to keep the oracle unitary); \emph{dice}
($H{\times}N{\times}d$ qubits in uniform superposition, retained to
uncompute the transition); \emph{ancillae} (rank counters, match
flags, neighbor counts, comparison bits, each uncomputed between
phases so the same physical qubits are reused); and a single
\emph{payoff flag} whose $\lvert 1\rangle$-probability equals the
conditional win probability. For best-arm selection, the first
rollout action is provided by a separate arm register of
$\lceil\log_2(k{+}1)\rceil$ qubits (see
Section~\ref{sec:upper}); the round-$1$ selector is bypassed for the
arm and the remaining selectors and dice carry the internal rollout
randomness of rounds $2,\dots,H$. We write $h \in \{1,\dots,H\}$ for
a rollout round.

\begin{theorem}[Polynomial-size coherent rollout oracle]\label{thm:oracle}
Assume: (i) indexing uses either rank-select decoder of
Theorem~\ref{thm:rank-select} or Theorem~\ref{thm:rs-block}, with
gate cost $G_{\mathrm{index}}$ and $O(w)$ round-local ancillae;
(ii) the transition $U_{\mathrm{trans}}^{(h)}$ is a reversible
circuit of gate cost $G_{\mathrm{trans}}$ acting only on the
round-$h{-}1{,}h$ configuration copies and round-$h$ dice;
(iii) terminal evaluation is a reversible comparator of gate cost
$G_{\mathrm{eval}}$ writing a single payoff qubit. Then
\begin{equation}\label{eq:oracle-compose}
  U \;=\; U_{\mathrm{eval}} \;\cdot\;
  \prod_{h=H}^{1}\bigl(U_{\mathrm{trans}}^{(h)} \cdot
  U_{\mathrm{index}}^{(h)}\bigr)
\end{equation}
is a unitary on the registers of Section~\ref{sec:oracle:conditions}
with
$G_{\mathrm{call}} = H\,(G_{\mathrm{index}} + G_{\mathrm{trans}}) + G_{\mathrm{eval}}$
and
$Q_{\mathrm{call}} = (H{+}1)Ns + Hw + HNd + q_{\mathrm{anc}}$,
where $s$ is the configuration state width, $d$ the dice width, and
$q_{\mathrm{anc}} = O(w)$ is ancilla reused across rounds.
$U^\dagger$ has the same cost by gate reversal.
\end{theorem}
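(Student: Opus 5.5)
The plan is to prove the three claims of Theorem~\ref{thm:oracle} in turn: unitarity, the gate count, and the qubit count. Each factor in equation~\eqref{eq:oracle-compose} is a reversible circuit acting on a fixed, declared subset of the registers of Section~\ref{sec:oracle:conditions}. First I will fix the register layout explicitly. There are $H{+}1$ configuration copies of $Ns$ qubits (round $0$ holds the initial configuration; round $h$ holds the state after round $h$). There are $H$ selector blocks of $w$ qubits, $H$ dice blocks of $Nd$ qubits, one payoff qubit, and a shared ancilla pool. For each factor I will state its support and its ancilla precondition and postcondition:
\begin{itemize}
\item $U_{\mathrm{index}}^{(h)}$ reads configuration copy $h{-}1$ and selector $h$, and leaves the decoded index in round-local ancillae. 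By Theorem~\ref{thm:rank-select} or Theorem~\ref{thm:rs-block}, the rest of its workspace is returned to $\lvert 0\rangle$.
\item $U_{\mathrm{trans}}^{(h)}$ reads copies $h{-}1$ and $h$ and dice $h$, by assumption (ii), together with the decoded index.
\item $U_{\mathrm{eval}}$ reads copy $H$ and XORs into the payoff qubit.
\end{itemize}

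Unitarity is then immediate, since a product of unitaries on a common Hilbert space is unitary. The substantive point is that the map is well defined on the whole register space, including out-of-range selector strings. For those strings I will invoke the totalization in equation~\eqref{eq:rank-select}: out-of-range ranks map to the sentinel $N$, and the transition treats $N$ as a no-op branch. No branch is therefore undefined or requires measurement.

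The step I expect to be the main obstacle is ancilla cleanness across rounds, which is needed for the claim $q_{\mathrm{anc}}=O(w)$ with reuse. The decoded index produced by $U_{\mathrm{index}}^{(h)}$ must not persist into round $h{+}1$. I would handle this by writing the round-$h$ factor as $U_{\mathrm{trans}}^{(h)}$ followed by an uncompute of the index. The uncompute is valid because the index is a deterministic function of configuration copy $h{-}1$ and selector $h$, and neither register is modified by $U_{\mathrm{trans}}^{(h)}$: the transition writes only into copy $h$, with copy $h{-}1$ retained. This retention is exactly why the layout keeps $H{+}1$ copies.

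The uncompute costs at most another $G_{\mathrm{index}}$ gates. I will either absorb this into $G_{\mathrm{index}}$, taken to denote the compute–uncompute pair, or note that it changes the count only by a constant factor. I will also check by induction on $h$ that, before round $h$, the ancilla pool is $\lvert 0\rangle$ and copies $h,\dots,H$ are in their initial clean state. That invariant gives a branchwise basis-state argument valid on every computational basis input, and linearity extends it to superpositions.

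Counting is then routine. Summing gate costs over the $H$ rounds plus evaluation gives $G_{\mathrm{call}}=H(G_{\mathrm{index}}+G_{\mathrm{trans}})+G_{\mathrm{eval}}$. Summing register widths gives $(H{+}1)Ns+Hw+HNd$ persistent qubits. The ancilla term is the maximum, not the sum, of per-phase ancilla usage, which is $O(w)$ by assumption (i) and the invariant above. Finally, reversing the gate sequence and replacing each gate by its inverse gives $U^\dagger$. Every gate in $\{$CNOT, Toffoli, $X\}$ is self-inverse, so $U^\dagger$ has exactly the same gate count.
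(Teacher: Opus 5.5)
Your proposal is correct and follows essentially the same route as the paper: unitarity from the product of reversible factors acting on per-round registers, ancilla reuse via compute/apply/uncompute within each round, additive gate costs from the product form, and a qubit count of persistent registers plus one reused $O(w)$ pool. Your explicit uncompute of the decoded index and your round-by-round cleanliness invariant spell out what the paper's proof compresses into one sentence. Absorbing that uncompute into $G_{\mathrm{index}}$ as a compute--uncompute pair is how the paper's cost section counts it, and it keeps the exact formula for $G_{\mathrm{call}}$.
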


\begin{proof}
Each factor in~\eqref{eq:oracle-compose} is unitary by construction.
The factors act on disjoint per-round selector and dice registers:
round~$h$ reads the configuration of round~$h{-}1$ and writes its own
per-round registers, which no subsequent round touches. Within each
round, indexing runs the rank-select scan
(Theorem~\ref{thm:rank-select}) and decoder; the transition then
updates the configuration using dice registers. All ancillae (rank
counter, match flag, neighbor counts, comparison bits) are allocated
via compute/apply/uncompute and returned to $\lvert 0\rangle$ before
the round ends, so the same $q_{\mathrm{anc}} = O(w)$ physical qubits
serve every round. Additivity of the cost terms is immediate from the
product form; the qubit formula counts the persistent registers plus
one round's reused ancilla pool.
\end{proof}

\smallskip\noindent\textbf{Discussion.}
Generic reversible simulation~\cite{bennett1973logical} provides a
fallback route once randomness is exposed as input, but its history
tape does not give the register accounting needed here.
Theorem~\ref{thm:oracle} is problem-specific. Each round computes
selector decoding and local transition data, applies the transition,
and uncomputes the scratch, so the same $O(w)$ ancilla pool is reused
across all $H$ rounds.

\subsection{Realizability and Per-Call Cost}
\label{sec:oracle:cost}

The template applies to any domain with reversible validity,
transition, and terminal evaluation circuits, including epidemic,
wildfire, and cascading-failure models. Section~\ref{sec:epidemic}
develops the epidemic case in full.

We now decompose the per-call gate and qubit cost of this
three-phase normal form. A single call is polynomial in the problem
size, and the decomposition of Theorem~\ref{thm:oracle} splits the
per-call gate count into independent contributions:
\begin{equation}\label{eq:generic-gates}
  G_{\mathrm{call}} \;=\;
    H \cdot \bigl(G_{\mathrm{index}} + G_{\mathrm{trans}}\bigr)
    \;+\; G_{\mathrm{eval}}.
\end{equation}
For bounded-neighborhood dynamics (local update rules with fixed
interaction radius), rank-select indexing costs
$G_{\mathrm{index}} = O(Nw)$ per round in bounded-span layouts via
Theorem~\ref{thm:rank-select} and $G_{\mathrm{index}} = O(N \log w)$
per round in unrestricted long-range layouts via
Theorem~\ref{thm:rs-block} (one compute-and-uncompute pair per
player); the stochastic transition costs $G_{\mathrm{trans}} = O(N)$
per round (one $O(1)$ update per cell); and $G_{\mathrm{eval}} =
O(N^2 w)$ when terminal evaluation compares two $w$-bit counts over
$N$~cells. The per-call total is $O(HNw + N^2 w)$ gates in the
bounded-span model and $O(HN \log w + N^2 w)$ in the unrestricted
model, and because the three phases contribute independently, each
can be improved in isolation.

Qubits split into shared registers (configuration state, persisting
across rounds) and per-round registers (selector indices, explicit
dice, ancillae). Per-round registers accumulate across $H$ rounds
because the explicit randomness cannot be erased without breaking
invertibility:
\begin{equation}\label{eq:generic-qubits}
  Q_{\mathrm{call}} \;=\; q_{\mathrm{shared}} + H \cdot q_{\mathrm{round}}.
\end{equation}
Dice registers dominate $q_{\mathrm{round}}$: for a $D$-sided die,
each of $N$~cells uses $\lceil\log_2 D\rceil$ qubits per round, so
$H$ is the dominant factor in total qubit cost.

\section{Classical Lower Bound}
\label{sec:lower}
% ============================================================

The oracle construction matters only when coherent access improves
over classical rollout sampling. We establish an
$\Omega(k/\varepsilon^2)$ classical rollout-query lower bound that,
under the bounded-influence lifting hypothesis, extends across a
family of at least $|\Sigma|^{|Q|}$ realizable configurations. A
standard change-of-measure argument~\cite{Kaufmann2016,lattimore2020bandit}
gives the abstract arm-pull bound. A bounded-influence lifting
theorem then transfers it from the abstract hard family to that
exponential family of configurations. Appendix~\ref{app:decay} gives
a sufficient condition for the lifting hypothesis in local
subcritical dynamics.

\subsection{Abstract Lower Bound}
\label{sec:lower:main}

Each rollout of an action reveals a bounded amount of information
about that action's win probability. To identify the best among
$k$~actions, we must evaluate each one enough times to accumulate
sufficient evidence. We make this precise via a standard
change-of-measure argument~\cite{Kaufmann2016,lattimore2020bandit}.

Consider a base instance in which action~$0$ has win probability
$\tfrac{1}{2} + 4\varepsilon$ and all others have
$\tfrac{1}{2}$, and for each $j \neq 0$ an alternative where only
action~$j$ changes to $\tfrac{1}{2} + 6\varepsilon$. Any
$(\varepsilon, 2/3)$-correct algorithm must distinguish the base from
each alternative. By the transportation
lemma~\cite[Lem.~1]{Kaufmann2016}, the expected number of rollouts
$N_j$ of action~$j$ satisfies
$\mathbb{E}_0[N_j] \geq
\mathrm{kl}(\tfrac{1}{3},\,\tfrac{2}{3})\, /\,
\mathrm{kl}(\tfrac{1}{2},\,\tfrac{1}{2}{+}6\varepsilon)$:
the information needed to distinguish the two hypotheses (numerator,
$= \tfrac{\ln 2}{3}$), divided by the information gained per rollout
(denominator, $\leq 96\varepsilon^2$).
This gives
$\mathbb{E}_0[N_j] \geq \tfrac{\ln 2}{288\varepsilon^2}$ for each
$j \neq 0$. Summing over the $k{-}1$ actions:

\begin{proposition}[Classical Lower Bound]\label{thm:lower}
Any $(\varepsilon, 2/3)$-correct algorithm on the hard family requires
total rollouts $\tau = \sum_j N_j$ satisfying
\begin{equation}\label{eq:lower}
  \mathbb{E}_0[\tau]
  \;\geq\; \frac{(k-1)\,\ln 2}{288\,\varepsilon^2}
  \;=\; \Omega\!\left(\frac{k}{\varepsilon^2}\right).
\end{equation}
\end{proposition}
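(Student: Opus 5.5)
The plan is the standard change-of-measure argument of Kaufmann et al.~\cite{Kaufmann2016}, applied separately to each suboptimal arm and then summed. We work with Bernoulli rollouts throughout.

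\emph{Setup.} Let $\nu^{(0)}$ be the base instance, with action~$0$ at $\tfrac12+4\varepsilon$ and every other action at $\tfrac12$. For each $j\neq 0$, let $\nu^{(j)}$ agree with $\nu^{(0)}$ except that action~$j$ sits at $\tfrac12+6\varepsilon$. Under $\nu^{(0)}$ the $\varepsilon$-good set is $\{0\}$, since every other arm trails by $4\varepsilon>\varepsilon$. Under $\nu^{(j)}$ the $\varepsilon$-good set is $\{j\}$, since action~$0$ trails by $2\varepsilon>\varepsilon$. Define the event $E_j=\{\hat a=j\}$. Correctness then gives $\Pr_0[E_j]\le 1/3$ and $\Pr_j[E_j]\ge 2/3$.

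\emph{Transportation step.} Because the two instances differ only in arm~$j$, the log-likelihood ratio of any trajectory collects contributions only from pulls of~$j$. Its expectation under $\nu^{(0)}$ is therefore $\mathbb{E}_0[N_j]\,\mathrm{kl}(\tfrac12,\tfrac12+6\varepsilon)$. The data-processing inequality~\cite[Lem.~1]{Kaufmann2016} then gives
\[
\mathbb{E}_0[N_j]\,\mathrm{kl}(\tfrac12,\tfrac12+6\varepsilon)\ \ge\ \mathrm{kl}\bigl(\Pr_0[E_j],\Pr_j[E_j]\bigr)\ \ge\ \mathrm{kl}(\tfrac13,\tfrac23),
\]
where the last inequality uses monotonicity of the binary $\mathrm{kl}$ in each argument away from the diagonal. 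For the stopping time, assume $\mathbb{E}_0[\tau]<\infty$; otherwise the bound is trivial. This makes Wald's identity applicable.

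\emph{Constants.} The numerator evaluates to $\mathrm{kl}(\tfrac13,\tfrac23)=\tfrac13\ln\tfrac12+\tfrac23\ln 2=\tfrac{\ln 2}{3}$. For the denominator, use the $\chi^2$ bound $\mathrm{kl}(p,q)\le (p-q)^2/(q(1-q))$. With $q=\tfrac12+6\varepsilon$, restrict to small enough $\varepsilon$ (say $\varepsilon\le 1/24$) so that $q(1-q)\ge 3/8$. This gives $\mathrm{kl}\le 36\varepsilon^2\cdot\tfrac83=96\varepsilon^2$. Combining, $\mathbb{E}_0[N_j]\ge \ln 2/(288\varepsilon^2)$ for each $j\neq 0$. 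Summing over the $k-1$ suboptimal arms and using $\tau\ge\sum_{j\ne0}N_j$ yields \eqref{eq:lower}.

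\emph{Main obstacle.} The only delicate points are bookkeeping. First, the $\varepsilon$-range on which the $96\varepsilon^2$ bound holds must be stated explicitly. Second, the events must be chosen correctly: using $E_j=\{\hat a=j\}$ rather than $\{\hat a=0\}$ is what gives probabilities on opposite sides of $1/2$ for every $j$ simultaneously. If the $\chi^2$ constant is too loose for the intended range of $\varepsilon$, the fallback is a direct Taylor bound on $\mathrm{kl}(\tfrac12,\tfrac12+x)$.
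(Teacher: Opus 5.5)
Your overall route matches the paper's: the same base and alternative instances, the transportation lemma of Kaufmann et al.\ with numerator $\mathrm{kl}(\tfrac13,\tfrac23)=\tfrac{\ln 2}{3}$ and denominator at most $96\varepsilon^2$, then summing over the $k-1$ suboptimal arms. The step that produces the constant is wrong as written, however. For every $q\in[0,1]$ we have $q(1-q)\le\tfrac14$, so the requirement $q(1-q)\ge\tfrac38$ can never be met. With $q=\tfrac12+6\varepsilon$ and $\varepsilon\le\tfrac1{24}$ you actually get $q(1-q)=\tfrac14-36\varepsilon^2\ge\tfrac3{16}$, so the $\chi^2$ bound only gives $\mathrm{kl}(\tfrac12,\tfrac12+6\varepsilon)\le 192\varepsilon^2$. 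For no range of $\varepsilon$ does it give anything below $144\varepsilon^2$: here the $\chi^2$ divergence is asymptotically twice the KL. Followed literally, your argument proves the bound with $576$ in place of $288$. That is still $\Omega(k/\varepsilon^2)$, but it is not the stated inequality \eqref{eq:lower}.

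The fix is the fallback you mention, and it should be the main argument. Directly, $\mathrm{kl}(\tfrac12,\tfrac12+x)=-\tfrac12\ln(1-4x^2)$. Also $-\ln(1-u)\le u/(1-u)\le\tfrac43 u$ for $0\le u\le\tfrac14$. Taking $x=6\varepsilon$ and $u=144\varepsilon^2\le\tfrac14$ (i.e.\ $\varepsilon\le\tfrac1{24}$) gives exactly $\mathrm{kl}(\tfrac12,\tfrac12+6\varepsilon)\le 96\varepsilon^2$, and hence the constant $288$. You are right that some restriction on $\varepsilon$ must be stated. The paper leaves it implicit, and the bound $96\varepsilon^2$ fails as $\varepsilon$ approaches $\tfrac1{12}$. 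One small correction to your closing remark: the event $\{\hat a=0\}$ works equally well. Action~$0$ is not $\varepsilon$-good under $\nu^{(j)}$, so $\Pr_j[\hat a=0]\le\tfrac13$ while $\Pr_0[\hat a=0]\ge\tfrac23$, giving $\mathrm{kl}(\tfrac23,\tfrac13)=\tfrac{\ln 2}{3}$. The choice of $E_j$ is therefore not essential.
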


Proposition~\ref{thm:lower} is the standard change-of-measure
argument, applied to a specific hard instance and stated with
explicit constants to match the $(\varepsilon, 2/3)$ guarantee used
in the upper-bound comparison. What it leaves open is whether the hard family
corresponds to a single contrived configuration or to a broad class of
natural problem instances. In rollout-based planning over spatial
domains, the configuration space is exponentially large, and a lower
bound confined to one pathological configuration provides weak
practical guidance. The next subsection shows that the same
$\varepsilon$-optimal arm structure persists across an exponentially
large peripheral family of realizable configurations under conditions
present in realistic planning problems.

\subsection{Robustness Across a Family of Hard Configurations}
\label{sec:lower:lifting}

The lifting is a per-configuration robustness statement. For every
configuration in the family, any
$(\varepsilon, 2/3)$-correct algorithm, even one that inspects and
branches on the configuration before sampling, pays the
$\Omega(k/\varepsilon^2)$ rollout-query cost at that configuration.
Two conditions, stability and modularity, make this lifting possible
and are natural for spatial planning. The modular case gives the
cleanest lifting statement. The nontrivial spatial case used by the
SIR instantiation is supplied by the subcritical-decay condition in
Appendix~\ref{app:decay}.

\begin{definition}[Stable, modular problems]\label{def:lifting}
A finite-horizon sequential decision problem is \emph{stable,
modular} (with gap $\varepsilon$) if its state decomposes into $n$
factors, $C = (C_1, \ldots, C_n) \in \Sigma^n$, where $\Sigma$ is
the per-factor alphabet (e.g., susceptible/infected/recovered).
Let $v_i(C)$ denote the expected rollout return of action~$i$
from~$C$. We assume a base configuration~$C^*$ at which some
action~$a^*$ is distinctly best:
$v_{a^*}(C^*) \geq v_j(C^*) + 3\varepsilon$ for all $j \neq a^*$.
Additionally:
\begin{enumerate}
\item \emph{Stability}~\cite{mcdiarmid1989}.
  Changing a single factor perturbs every action's value by a
  bounded amount, and a large subset of factors together carries
  only a small influence budget. Formally, for each factor~$p$
  there is $\delta_p \geq 0$ with
  $|v_i(C) - v_i(C')| \leq \delta_p$ whenever $C, C'$ differ only
  at factor~$p$, and some subset $Q \subseteq \{1,\ldots,n\}$
  satisfies $\sum_{p \in Q} \delta_p \leq \varepsilon$.
\item \emph{Modularity.}
  Each action's rollout value is determined by its own support
  region of factors, and these support regions do not overlap.
  Formally, each value $v_i$ depends only on factors in a support
  region $R_i \subseteq \{1,\ldots,n\}$, the $R_i$ are pairwise
  disjoint, and the peripheral set~$Q$ from the stability clause
  is disjoint from $\bigcup_i R_i$.
\end{enumerate}
\end{definition}

\begin{theorem}[Bounded-Influence Lifting]\label{thm:lifting}
Let a finite-horizon planning class satisfy the stability and
modularity conditions of Definition~\ref{def:lifting} with witness
$(C^*, a^*)$. Let $\mathcal{F}$ consist of configurations identical
to $C^*$ on every factor outside $Q$, with the factors inside $Q$
free. Formally,
$\mathcal{F} := \{ C \in \Sigma^n : C_p = C^*_p \text{ for all } p
\notin Q\}$. Then $|\mathcal{F}| \geq |\Sigma|^{|Q|}$, action $a^*$
is $\varepsilon$-optimal at every $C \in \mathcal{F}$, with
$v_{a^*}(C) \geq v_j(C) - \varepsilon$ for every~$j$, and
consequently the $\Omega(k/\varepsilon^2)$ classical lower bound of
Proposition~\ref{thm:lower} holds at every $C \in \mathcal{F}$.
\end{theorem}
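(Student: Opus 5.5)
The plan has three parts, in increasing order of subtlety: the counting claim, the $\varepsilon$-optimality claim via a telescoping path argument, and the transfer of the lower bound.

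\emph{Counting.} By definition, $\mathcal{F}$ is in bijection with $\Sigma^{Q}$: every factor outside $Q$ is pinned to $C^*_p$, and each factor in $Q$ ranges freely over $\Sigma$. So $|\mathcal{F}| = |\Sigma|^{|Q|}$, which gives the stated inequality. It also shows $C^* \in \mathcal{F}$.

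\emph{Optimality.} Fix $C \in \mathcal{F}$ and enumerate $Q = \{p_1,\dots,p_m\}$. Build a path $C^* = C^{(0)}, C^{(1)}, \dots, C^{(m)} = C$, where $C^{(\ell)}$ takes the coordinates $p_1,\dots,p_\ell$ from $C$ and all others from $C^*$. Consecutive configurations differ only at factor $p_\ell$, so the stability clause gives $|v_i(C^{(\ell)}) - v_i(C^{(\ell-1)})| \le \delta_{p_\ell}$ for every action $i$. Telescoping yields $|v_i(C) - v_i(C^*)| \le \sum_{p\in Q}\delta_p \le \varepsilon$ for all $i$. Combining this with the base gap, for any $j \neq a^*$:
\[
v_{a^*}(C) \;\ge\; v_{a^*}(C^*) - \varepsilon \;\ge\; v_j(C^*) + 2\varepsilon \;\ge\; v_j(C) + \varepsilon .
\]
This is strictly stronger than $v_{a^*}(C) \ge v_j(C) - \varepsilon$, and the case $j=a^*$ is trivial. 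Modularity gives an even cleaner route: $Q$ is disjoint from $\bigcup_i R_i$, so each $v_i$ is constant on $\mathcal{F}$ and $v_i(C) = v_i(C^*)$ exactly. I would present the stability argument as the main one and note that modularity makes the perturbation vanish.

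\emph{Lower-bound transfer.} This is the step I expect to be the main obstacle, since it requires fixing what ``the lower bound holds at $C$'' means. At each fixed $C \in \mathcal{F}$, the arm means $(v_i(C))_i$ retain a gap structure: $a^*$ leads by at least $\varepsilon$, and under modularity by the full $3\varepsilon$ of $C^*$. My plan is to realize the hard family of Proposition~\ref{thm:lower} at $C$ itself. Take the base instance to be the actual arm means at $C$. For each $j \neq a^*$, take the alternative in which only arm $j$'s Bernoulli mean is raised so that it exceeds $v_{a^*}(C)$ by more than $\varepsilon$.

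Modularity is what makes these alternatives legitimate, independent perturbations. Each $v_j$ depends only on its own region $R_j$, the regions are disjoint, and all of them are disjoint from $Q$. Hence altering arm $j$ leaves every other arm and the configuration on $Q$ untouched. This is exactly the single-arm change-of-measure structure that the transportation lemma requires.

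An algorithm that inspects $C$ before sampling gains nothing: $C$ is fixed, and the distinguishing information must still come from rollouts of arm $j$. Each such rollout carries at most $\mathrm{kl}(v_j, v_j')$ nats, which is $O(\varepsilon^2)$ when $v_j$ is bounded away from $0$ and $1$. The transportation-lemma computation from Proposition~\ref{thm:lower} then repeats verbatim up to constants, giving $\mathbb{E}_C[\tau] = \Omega(k/\varepsilon^2)$.

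The delicate point is keeping the means in a range where the KL is $\Theta(\varepsilon^2)$. If needed, I would state the result for the hard family instantiated at $C^*$, so that the means are $\tfrac12 + O(\varepsilon)$ as in Proposition~\ref{thm:lower}. Because the means are invariant across $\mathcal{F}$ (by modularity) or within $\varepsilon$ of their values at $C^*$ (by stability), the same constants then carry over to every $C \in \mathcal{F}$.
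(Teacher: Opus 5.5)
Your proposal is correct and matches the paper's proof. Counting is immediate from the definition of $\mathcal{F}$. The single-site stability bounds telescope over the modified sites in $Q$, so the $3\varepsilon$ gap at $C^*$ loses at most $2\varepsilon$, and modularity makes the values on $\mathcal{F}$ coincide exactly. As in the paper's follow-up paragraph, the lower-bound transfer rests on modularity supplying single-arm alternatives for the transportation argument. Your caveat that the means must stay at $\tfrac12 + O(\varepsilon)$ is also left implicit in the paper, which tacitly takes $C^*$ to realize the hard instance of Proposition~\ref{thm:lower}.
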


\begin{proof}
$|\mathcal{F}| \geq |\Sigma|^{|Q|}$ is immediate from the definition
of~$\mathcal{F}$. Stability's single-site bounds compose additively:
for any $C \in \mathcal{F}$ and any arm~$j$,
$|v_j(C) - v_j(C^*)| \leq \sum_{p \in Q} \delta_p \leq \varepsilon$
by a telescoping argument over the modified sites in~$Q$. The
$3\varepsilon$ gap at~$C^*$ therefore shrinks by at most $2\varepsilon$
at~$C$, leaving a gap of at least~$\varepsilon$ at $a^*$. Under
modularity, $\delta_p = 0$ for every $p \in Q$ (since $Q$ is disjoint
from every arm's support region), so expected values coincide and
the gap is preserved at~$3\varepsilon$. The bounded-influence form
is the general statement, and modularity is the $\delta = 0$
corollary.
\end{proof}

Modularity does two distinct jobs in this lifting. First, disjoint
arm supports give the agreement condition required by the
change-of-measure argument. Modifying the region tied to one arm
leaves every other arm's reward law unchanged. Second, when the
peripheral set $Q$ is disjoint from all arm supports, every
configuration in $\mathcal{F}$ induces the same expected-value
instance as the base, so the abstract arm-pull lower bound on the
base lifts to every configuration in the family.

\smallskip\noindent\textbf{Classical $\varepsilon$-BAI consequence.}
For each $C \in \mathcal{F}$, the rollout process induces a $k$-arm
bandit instance with expected values $v_i(C)$. Modularity supplies
the KL-alternative agreement on non-target arms that the
transportation argument~\cite{Kaufmann2016} requires (perturbing
one arm's support leaves other arms' reward laws fixed). So
Proposition~\ref{thm:lower}'s $\Omega(k/\varepsilon^2)$ bound
transfers to each $C$, even when the algorithm inspects $C$ before
sampling.

We recast the exponential-family amplification
of~\cite{lattimore2020bandit} through stability and modularity,
conditions under which the transfer follows directly. Modularity holds whenever different first moves lead to
non-overlapping rollout regions (e.g., spacing $> 2H$ on a grid).

\smallskip\noindent\textbf{Instantiating the peripheral set.}
A conservative choice places $Q$ outside the radius-$H$ causal cone,
where $\delta_p = 0$ identically. When $H$ is large, the radius-$H$
cone leaves no room for a peripheral set. Appendix~\ref{app:decay}
handles this for subcritical dynamics.

% ============================================================
\section{Quantum Upper Bound}
\label{sec:upper}
% ============================================================

The coherent rollout oracle constructed above provides the
circuit that the quantum upper bound for $\varepsilon$-best-arm
identification of Wang et al.~\cite{wang2021quantum} assumes.
Composing the two yields an $(\varepsilon, 2/3)$-correct
action-selection algorithm using $\widetilde{O}(\sqrt{k}/\varepsilon)$
oracle calls, giving a near-quadratic separation from the classical
lower bound in both $k$ and $1/\varepsilon$.

\subsection{Quantum Move Selection Theorem}
\label{sec:upper:theorem}

The oracle from Section~\ref{sec:oracle} provides the two properties
Wang et al.'s upper bound requires: (1)~it is an explicit unitary
circuit, so $U^\dagger$ is obtained by reversing the gate sequence,
and (2)~coherent terminal evaluation writes the payoff into a flag
qubit whose $\lvert 1\rangle$ measurement probability equals the win
probability.

\begin{corollary}[Quantum Move Selection: $\widetilde{O}(\sqrt{k}/\varepsilon)$ Queries]\label{thm:upper}
Given a coherent rollout oracle $U$ satisfying the conditions of
Theorem~\ref{thm:oracle}, with $U$ available as a unitary oracle (and
hence $U^\dagger$ as its inverse), realizing coherent amplitude access
to arm rewards in the sense of Wang et
al.~\cite{wang2021quantum} (equivalently, the strong quantum oracle
of Wang et al.~\cite{wang2025quantum}, up to an ignorable junk
register), there exists a quantum algorithm that is
$(\varepsilon, 2/3)$-correct using
$\widetilde{O}(\sqrt{k}/\varepsilon)$ calls to $U$ and $U^\dagger$.
\end{corollary}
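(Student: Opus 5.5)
The plan is to reduce the statement to the quantum best-arm pipeline of Wang et al.~\cite{wang2021quantum}, treating their result as a black box. The real work is to show that the oracle of Theorem~\ref{thm:oracle} meets their access model. Concretely, I would build from $U$ the arm-indexed unitary
\[
  \mathcal{O}\colon\ \lvert i\rangle\lvert 0\rangle \mapsto \lvert i\rangle\bigl(\sqrt{v_i}\,\lvert\psi_i^1\rangle\lvert 1\rangle + \sqrt{1-v_i}\,\lvert\psi_i^0\rangle\lvert 0\rangle\bigr).
\]
Here $i$ lives in the arm register of $\lceil\log_2(k{+}1)\rceil$ qubits, and the junk states $\lvert\psi_i^b\rangle$ collect the configuration, selector, dice and ancilla registers.

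\textbf{Step 1: construct $\mathcal{O}$.} First apply Hadamards to the selector registers of rounds $2,\dots,H$ and to all dice registers. This gives uniform superpositions over $2^w$ selector strings and $D$-ary dice, or whatever distribution is prepared. Then apply $U$, with the arm register substituted for the round-$1$ selector. Since $U$ acts as a permutation on computational basis states, each basis branch of the randomness registers deterministically produces one rollout. By the branchwise agreement built into the three phases, the payoff flag then equals that rollout's terminal predicate. Summing squared amplitudes over the branches, the $\lvert 1\rangle$-probability of the flag equals the rollout win probability under the induced policy, and this is exactly $v_i$. Two points need checking here:
\begin{itemize}
  \item Sentinel aliasing (Section~\ref{sec:oracle:conditions}) changes the policy but not well-definedness: $v_i$ is defined as the expected return of this very policy.
  \item $\mathcal{O}^\dagger$ is the gate reversal of $\mathcal{O}$. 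By Theorem~\ref{thm:oracle} it has the same cost, and each call to $\mathcal{O}$ or $\mathcal{O}^\dagger$ uses exactly one call to $U$ or $U^\dagger$.
\end{itemize}
The leftover registers are the ``ignorable junk register'' of the strong oracle in \cite{wang2025quantum}. Amplitude estimation only uses the projector onto the flag qubit, so the junk does no harm.

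\textbf{Step 2: invoke the pipeline.} Each comparison in D\"urr--H\o yer maximum finding~\cite{durr1996quantum} is implemented by amplitude estimation (IQAE~\cite{grinko2021iterative}) to additive precision $\Theta(\varepsilon)$. Each such estimate costs $\widetilde{O}(1/\varepsilon)$ uses of $\mathcal{O}$ and $\mathcal{O}^\dagger$. Maximum finding makes $O(\sqrt{k})$ comparisons, and the failure probability of every estimate is driven down to $O(1/\sqrt{k})$ at a logarithmic cost, which the tilde absorbs. A union bound then gives overall success probability at least $2/3$. On that event, the returned arm has $v_{\hat a}\ge\max_j v_j-\varepsilon$, which is Definition~\ref{def:eps-correct}. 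For this last step I would cite Wang et al.'s theorem directly rather than redo the noisy-comparison analysis.

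\textbf{Main obstacle.} The delicate step is the model match in Step~1. Wang et al.\ and the strong oracle of \cite{wang2025quantum} assume the reward randomness sits coherently inside the oracle. The argument must therefore verify two things:
\begin{itemize}
  \item Preparing the dice and selector superpositions ourselves puts us in the coherent model, not the channel model that \cite{buchholz2025quantum} rules out.
  \item The arm register can be queried in superposition. Since the arm register simply replaces the round-$1$ selector, this follows from Theorem~\ref{thm:oracle} holding for every basis value of that register.
\end{itemize}
I would also handle the out-of-range arm values $i\ge k$ explicitly, for example by routing them to the sentinel no-op, so that $\mathcal{O}$ stays unitary on the whole arm register.
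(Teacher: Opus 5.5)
Your proposal is correct and follows the same route as the paper. The paper checks that $U$ is an explicit reversible unitary (so $U^\dagger$ comes from gate reversal) whose payoff qubit has $\lvert 1\rangle$-probability $v_i$, then invokes the best-arm theorem of Wang et al.\ as a black box; your Step~1 spells out that access-model check (selector and dice preparation, arm register replacing the round-$1$ selector, leftover registers as junk), which the paper states in one line.

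One caution on Step~2: D\"urr--H\o yer evaluates its comparisons coherently, in superposition over arms, so the subroutine must be coherent amplitude estimation with coherent error reduction rather than the measurement-based IQAE. Since you ultimately cite Wang et al.'s theorem for this step, the argument is unaffected.
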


\begin{proof}
The oracle of Theorem~\ref{thm:oracle} is an explicit reversible
unitary, so both $U$ and $U^\dagger$ are available. Its payoff qubit
has $\lvert 1\rangle$-probability equal to the expected value $v_i$, so it
satisfies the coherent amplitude-access model of Wang et
al.~\cite{wang2021quantum}. Their best-arm algorithm therefore applies
directly and yields $(\varepsilon, 2/3)$-correct action selection
using $\widetilde{O}(\sqrt{k}/\varepsilon)$ calls to $U$ and
$U^\dagger$, where $\widetilde{O}$ suppresses polylogarithmic factors
in $k$ and $1/\varepsilon$.
\end{proof}

\smallskip\noindent\textit{Oracle-defined bandit instance.}
Proposition~\ref{thm:lower} and Corollary~\ref{thm:upper} both refer to the same
instance: arm~$j$ has mean $\mu_j = \Pr[\text{payoff} = 1 \mid
\text{first move}=j]$ under the total unitary $U$ with the
out-of-range convention of Section~\ref{sec:oracle:index}. No
uniform-over-valid-actions policy reading is required. The classical
bound constrains the expected values $v_i$, while the quantum bound
operates on the unitary.

\smallskip\noindent\textit{First-move register versus rollout selectors.}
A dedicated register carries the first-move label. A fixed decoder
$\mathrm{firstMove}:[k]\to[N]$ injects each first-move index into a
position, and the round-$1$ configuration places that move
deterministically before rollout begins. The selector and dice
streams register only the internal randomness of rounds
$2,\dots,H$. For the instantiations in this paper, $k\le N$ (one
candidate per valid first-move position), so the first-move register
is $\lceil\log_2(k{+}1)\rceil\le w$ qubits and $\mathrm{firstMove}$
is an index map that adds no gate cost beyond the existing
rank-select pass. This separation lets the bandit lower bound talk
about expected values while the per-call cost continues to be
dominated by the rollout.

The three nested levels of the algorithm
(Figure~\ref{fig:full-system}) are: (1)~quantum maximum
finding~\cite{durr1996quantum} over $k$ actions using
$O(\sqrt{k})$ coherent comparisons, (2)~amplitude estimation using
$\widetilde{O}(1/\varepsilon)$ applications of the Grover iterate per
comparison, and (3)~the rollout oracle from Section~\ref{sec:oracle}
at $\mathrm{poly}(N, H, w)$~gates per call.

\subsection{Quadratic Separation}
\label{sec:upper:separation}

At the oracle-call level, Proposition~\ref{thm:lower}
($\Omega(k/\varepsilon^2)$ classical) and Corollary~\ref{thm:upper}
(Wang et al.'s $\widetilde{O}(\sqrt{k}/\varepsilon)$ quantum bound
applied to our oracle) give a near-quadratic separation, up to
polylogarithmic factors. End-to-end gate complexity is obtained by
multiplying these query bounds by the per-call cost from
Section~\ref{sec:oracle:cost}, with both classical and quantum
algorithms paying $\mathrm{poly}(N, H, w)$ per rollout.
Section~\ref{sec:sway} instantiates them on two benchmarks.

% ============================================================
\section{Instantiation: Epidemic Intervention}
\label{sec:sway}
% ============================================================

Our main case study is an epidemic intervention model under
stochastic SIR dynamics~\cite{kermack1927contribution}. We also
introduce \emph{Sway}, a two-player stochastic placement game
designed to stress-test branch-dependent valid-action indexing. It is
a sanity-check benchmark exercising the same three-phase
decomposition in a simpler setting.
For both, we check correctness against exhaustive classical
computation on small instances and project resource counts via the
cost formulas of Section~\ref{sec:oracle:cost}.

\subsection{Epidemic Intervention under SIR Dynamics}
\label{sec:epidemic}

The model lives on an $m \times m$ grid; each site is susceptible,
infected, or recovered, encoded in two qubits (one for infected, one
for removed) matching the reference implementation.
The rollout policy maps selector strings to vaccination sites via
totalized rank-select, so each round either vaccinates one
susceptible cell (for in-range ranks) or performs the sentinel no-op
(for out-of-range ranks); the disease then spreads by a per-neighbor rule and
infected sites recover independently, both resolved by one 8-sided
die per site per round. After $H$~rounds, the payoff is~$1$ if the
infected count is at most a threshold~$T$.

\smallskip\noindent\textbf{Global-threshold payoff.}
The SIR payoff is a global threshold predicate, so the strict modular
case of Definition~\ref{def:lifting} does not apply with $\delta_p =
0$. A peripheral site can affect the terminal infected count even
when it lies outside an action's local causal region. For the SIR
instantiation we use the bounded-influence form instead. In our SIR
implementation, spread occurs on the four-neighbor grid and each
per-edge infection attempt succeeds with probability $1/8$, so
Appendix~\ref{app:decay} applies with $\kappa = 4$, $p = 1/8$, and
$\kappa p = 1/2 < 1$. Choosing $Q$ outside a sufficiently large
radius gives $\sum_{p \in Q}\delta_p \leq \varepsilon$, an
exponential family of SIR configurations to which the lifted
classical lower bound applies.

In this model, the decision problem is best-arm identification over
$k$ candidate
vaccination deployments, each an arm whose expected payoff is the
rollout value of the induced SIR trajectory. The rollout lower bound
of Section~\ref{sec:lower} gives the matching $\Omega(k/\varepsilon^{2})$
classical sample lower bound, and the coherent rollout oracle plugged
into the quantum best-arm pipeline of Section~\ref{sec:upper} achieves
$\widetilde{O}(\sqrt{k}/\varepsilon)$ oracle calls. The same
three-phase template applies, with only the validity predicate,
transition rule, and payoff function changing. Valid-action indexing
is one rank-select pass over the susceptible mask per round. The
stochastic transition compares each site's infected neighbor count
against its die and updates conditionally in $O(N)$ gates. Terminal
evaluation is one threshold comparison on the infected count in
$O(Nw)$ gates.

Per-call gate counts are reported pre-decomposition and before any
fault-tolerant overhead. They are not a practical-readiness claim.

\subsection{Sway Benchmark and Measured Costs}
\label{sec:sway:costs}

Sway is a two-player stochastic game on an $m \times m$ grid with
three cell states (black, white, empty). Each round, players place
pieces on empty cells (branch-dependent validity), then each occupied
cell flips with probability $(4-k)/20$ via a per-cell $d_{20}$ roll,
where $k$ is the count of same-color orthogonal neighbors, with
binary terminal payoff (black outnumbers white). The Sway dynamics
are bounded-neighborhood, with each update depending only on nearby
board locations and local stochastic choices. Each round applies two
rank-select passes (one per player), and the per-call cost is
$O(HNw + N^2 w)$ with the quadratic term coming from majority
comparison (narrower than the epidemic's threshold evaluation).

Table~\ref{tab:correctness} reports branchwise correctness
verification at sizes where the classical state space is enumerable.
For each instance, every sampled branch (fixed random seeds) matches
the classical rollout bit-for-bit, the aggregate Monte Carlo payoff
falls within the 95\% confidence interval of the exact value, and
the exact value comes from full state-space enumeration. The
purpose is correctness verification of the reversible oracle rather
than a scaling demonstration. Scaling is reported separately in
Table~\ref{tab:scaling} via direct compilation, with the symbolic
cost formulas of Section~\ref{sec:oracle:cost} as the asymptotic
accounting. Both demos use the sequential-scan rank-select of
Theorem~\ref{thm:rank-select}; the blocked construction of
Theorem~\ref{thm:rs-block} is implemented and validated separately
(Appendix~\ref{app:artifact}) but not exercised by the table
pipeline. All counts are pre-decomposition gate tallies targeting
future error-corrected hardware; Clifford$+T$ breakdowns under a
chosen code and magic-state factory are left to future work.

\begin{table}[ht]
\centering
\caption{Oracle correctness check.}
\label{tab:correctness}
\footnotesize
\setlength{\tabcolsep}{3pt}
\begin{tabular}{@{}llrrrcc@{}}
\toprule
Domain & Instance & Qubits & Depth & Gates & MC Payoff & Exact \\
\midrule
Sway  & $3{\times}3,H{=}2$       & 169 & 3\,079  & 9\,768  & $.281\!\pm\!.028$ & $.271$ \\
Sway  & $5{\times}5,H{=}3$       & 667 & 18\,258 & 55\,597 & $.337\!\pm\!.021$ & $.325$ \\
Epi.\ & $3{\times}3,H{=}2,T{=}2$ & 146 & 2\,383  & 6\,472  & $.883\!\pm\!.020$ & $.891$ \\
\bottomrule
\end{tabular}
\end{table}

\begin{table}[ht]
\centering
\caption{Direct compiled resource counts, pre-decomposition and
before fault-tolerant overhead.}
\label{tab:scaling}
\footnotesize
\setlength{\tabcolsep}{3pt}
\begin{tabular}{@{}rrrrrrr@{}}
\toprule
   &   &    & \multicolumn{2}{c}{Qubits} & \multicolumn{2}{c}{Gates} \\
\cmidrule(lr){4-5} \cmidrule(lr){6-7}
$m$ & $H$ & $N$ & Sway & Epi. & Sway & Epi. \\
\midrule
 5 &  5 &  25 &   916 &   767 &  76\,720 &  56\,602 \\
 7 &  5 &  49 & 1\,708 & 1\,452 & 180\,615 & 124\,735 \\
10 &  5 & 100 & 3\,363 & 2\,893 & 481\,201 & 280\,230 \\
10 & 10 & 100 & 6\,503 & 5\,463 & 793\,901 & 558\,995 \\
20 & 10 & 400 & 25\,189 & 21\,409 & 6\,072\,641 & 2\,592\,183 \\
\bottomrule
\end{tabular}
\end{table}

% ============================================================
\section{Related Work}
\label{sec:related}
% ============================================================

\noindent\textbf{Quantum algorithms with coherent oracle access.}
Montanaro~\cite{montanaro2015quantum} gives a general speedup for
estimating the expected output of a randomized or quantum subroutine.
Wang et al.~\cite{wang2021quantum} combine amplitude estimation with
D\"{u}rr--H{\o}yer maximum finding~\cite{durr1996quantum} for quantum
best-arm identification under coherent oracle access. Bennett's
reversible simulation~\cite{bennett1973logical} is the standard
generic realization route. Wang et
al.~\cite{wang2025quantum} gave tight query bounds under a hierarchy
of oracle models. Buchholz et al.~\cite{buchholz2025quantum} showed
the channel-access model lacks coherent randomness access, so the
speedup no longer applies. Quantum game-tree
evaluation~\cite{farhi2007quantum,childs2009every,ambainis2017quantum}
and dynamic programming with explicit state
enumeration~\cite{ronagh2019quantum,luo2025quantum} similarly assume
oracle access to leaf values or transition operators.

\noindent\textbf{The oracularization challenge.}
Dunjko, Taylor, and Briegel~\cite{dunjko2016quantum,dunjko2018machine}
showed that converting classical environment dynamics into coherent
quantum oracles requires practically restrictive assumptions. Hamann
et al.~\cite{hamann2021quantum} extended the framework to non-epochal
environments. Small explicit-MDP
demonstrations~\cite{wiedemann2022quantum} confirm physical
realizability but do not address implicit state spaces.

\noindent\textbf{Explicit stochastic simulation circuits.}
Concurrent work by Srikant~\cite{srikant2026quantum} uses explicit
randomness registers for queueing simulation with analytically
characterized transition kernels and a fixed action space. Our
oracle targets implicit multi-round rollout with branch-dependent
valid actions and procedural terminal evaluation.
Peters~\cite{peters2021machine} constructs coherent circuits for
MCTS rollouts and fixed-length policy-guided stochastic walks but
does not provide the selector-to-valid-action decoder for a
branch-dependent validity mask. Wang and Kan~\cite{wang2024quantum}
constructed circuits for the Heston stochastic volatility model,
for analytically specified SDEs with no decision component.

\noindent\textbf{Lower-bound lineage.}
Proposition~\ref{thm:lower} follows a standard best-arm
template~\cite{Kaufmann2016,lattimore2020bandit}. Our contribution
on the lower-bound side is the bounded-influence lifting
(Theorem~\ref{thm:lifting}), inspired by
Assouad's method~\cite{assouad1983deux}, which makes the hard family
apply to implicit configuration
spaces~\cite{qu2019exploiting,kearns1999efficient};
Appendix~\ref{app:decay} notes a standard path-counting condition
under which the hypothesis is satisfied for local subcritical
dynamics.

% ============================================================
\section{Conclusion}
\label{sec:conclusion}
% ============================================================

Finite-horizon planning with branch-dependent valid actions admits
a reversible oracle achieving
$\widetilde{O}(\sqrt{k}/\varepsilon)$ query complexity, a
near-quadratic speedup over the classical $\Omega(k/\varepsilon^2)$
floor proved here, using $O(w)$ clean ancillae per call. Coherent rank-select, the missing primitive,
admits two reversible implementations. A sequential scan of $O(Nw)$
bounded-fan-in gates is gate-optimal under bounded-span connectivity.
A blocked construction of $O(N \log w)$ gates beats the scan when
long-range gates are free. An $\Omega(Nw)$ total prefix-span lower
bound explains the relationship between them, with the scan
converting span into gates and the blocked circuit concentrating
span into a few long-range operations.

The bounded-influence lifting addresses the concern that any
matching classical floor might reflect only an adversarial single
instance. The same $\Omega(k/\varepsilon^2)$ floor holds across an
exponential family of locally-coupled dynamics. For this checkable
planning class, the coherent-access assumption underlying
the oracularization concern of Dunjko et
al.~\cite{dunjko2016quantum} can be discharged constructively, so
the quantum query speedup is realizable under conditions the
algorithm designer can check directly.

Two technical handles remain. Long-range dynamics fail the
spatial-decay condition $\kappa p < 1$ of
Appendix~\ref{app:decay}, so extending the lifting beyond
locally-interacting state factors needs a different decay condition
or coupling argument. Closed-loop policies that pick each action
based on prior outcomes would recompute the rank-select indexing
mid-circuit instead of once per call, changing the oracle's
reversibility analysis.

\appendices
\section{Theorem-to-Artifact Map}\label{app:artifact}

Lean~4 and Qiskit artifacts are linked from the abstract.

{\footnotesize\raggedright
\begin{itemize}\setlength\itemsep{1pt}
\item Prop~\ref{prop:rs-gates-lb}: \texttt{RankSelectCommunication}.
\item Lem~\ref{thm:rs-cut}: \texttt{RankSelectCommunication}.
\item Lem~\ref{cor:rs-crossing}: \texttt{RankSelectCommunication},
  \texttt{RankSelectCircuit}.
\item Cor~\ref{thm:rs-emergence}: \texttt{RankSelectCommunication},
  \texttt{RankSelectUniversality}.
\item Thm~\ref{thm:rank-select}: \texttt{RankSelectUpperBound}.
\item Thm~\ref{cor:rs-tight}: \texttt{RankSelectUpperBound},
  \texttt{RankSelectGateLowerBound},
  \texttt{RankSelectCommunication}.
\item Thm~\ref{thm:rs-block}: \texttt{RankSelectBlocked}.
\item Thm~\ref{thm:oracle}: \texttt{OracleCostProof}.
\item Prop~\ref{thm:lower}: \texttt{RolloutLowerBound}.
\item Thm~\ref{thm:lifting}: \texttt{GeneralizedLifting}.
\item Cor~\ref{thm:upper}: \texttt{QuantumUpperBound}.
\item App.~\ref{app:decay} (Thm~\ref{thm:decay}):
  \texttt{SpatialDecay}.
\end{itemize}
}

\smallskip\noindent The Qiskit artifact implements rank-select,
the epidemic and Sway rollout oracles, and the branchwise validation
that yields the counts in Tables~\ref{tab:correctness},~\ref{tab:scaling}.
The blocked construction of Theorem~\ref{thm:rs-block} is implemented
as a coherent primitive, compiled to a Qiskit reversible circuit, and
cross-checked against the scan implementation by basis-state emulation
on every $(occ, rank)$ pair for $n \le 8$.

\section{Spatial-Decay Sufficient Condition}\label{app:decay}

One sufficient condition for the bounded-influence hypothesis of
Theorem~\ref{thm:lifting} under local subcritical dynamics; not used
by the oracle construction or quantum upper bound.

\begin{theorem}[Subcritical Influence Decay]\label{thm:decay}
Let the factors sit on a graph~$G$ of maximum degree~$\kappa \geq 2$
whose edges encode the dependency structure of the dynamics.
Consider bounded-neighborhood dynamics with horizon~$H$, where each
factor's update depends only on its graph neighbors and the per-edge
per-round disagreement probability is at most~$p$, with
subcriticality $\kappa p < 1$. If two initial states differ at a
single factor~$s$, the influence on the expected rollout value of
any action satisfies
\begin{equation}\label{eq:decay-per-round}
  \delta_{\mathrm{path}}(d) \;\leq\;
  \min\!\Bigl\{1,\; \tfrac{\kappa}{\kappa - 1}\cdot
    \bigl((\kappa{-}1)\, p\bigr)^d \Bigr\}\,,
\end{equation}
where $d$ is the graph distance from~$s$ to the action position
and $\delta_{\mathrm{path}}(d)$ is the contribution from length-$d$
propagation paths.
Over $H$~rounds, the cumulative influence at distance~$d$ is bounded by
\begin{equation}\label{eq:decay-multi}
  \delta(d) \;\leq\; \min\!\Bigl\{1,\;
  \sum_{\ell=d}^{H}
  \kappa \cdot (\kappa{-}1)^{\ell-1} \cdot \binom{H}{\ell} \cdot
  p^{\ell}\Bigr\}\,,
\end{equation}
which equals zero for $d > H$ and is small in the lower tail of the
associated binomial once $d > H \kappa p$ by a standard Chernoff
bound.
\end{theorem}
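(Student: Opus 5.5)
The plan is a path-counting (union-bound) coupling argument. First I fix the coupling: run the two rollouts, from initial states $C$ and $C'$ that differ only at factor $s$, on shared dice and selector registers. The dynamics are then deterministic given the randomness, and disagreement can only spread along graph edges. Concretely, a factor $u$ can disagree at round $t$ only if it disagreed at round $t-1$ or some neighbor did. In the second case the edge must also "transmit", which happens with probability at most $p$ per edge per round, independently across rounds since fresh dice are used. Under the coupling,
\[
|v_i(C)-v_i(C')| \le \Pr[\text{disagreement reaches the support of action } i \text{ by round } H],
\]
because the payoff lies in $[0,1]$. This already accounts for the outer $\min\{1,\cdot\}$.

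Next I bound that probability by a union bound over propagation witnesses. Any disagreement reaching the action position at distance $d$ must be witnessed by a walk $s=u_0,u_1,\dots,u_\ell$ with $\ell\ge d$, where each hop $u_{j-1}\to u_j$ is activated at a distinct round $t_1<\dots<t_\ell\le H$. I count these witnesses as follows:
\begin{itemize}
\item Non-backtracking walks of length $\ell$ from $s$ number at most $\kappa(\kappa-1)^{\ell-1}$. Backtracking hops can be discarded, since a minimal witness never returns along the edge it just used.
\item The sets of activation rounds number at most $\binom{H}{\ell}$.
\item Each witness has probability at most $p^\ell$, using independence across rounds.
\end{itemize}
Summing over $\ell$ from $d$ to $H$ gives \eqref{eq:decay-multi}. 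Isolating the single-length term, and ignoring the choice of rounds for the per-round statement, gives $\kappa(\kappa-1)^{d-1}p^d=\tfrac{\kappa}{\kappa-1}((\kappa-1)p)^d$, which is \eqref{eq:decay-per-round}. For $d>H$ the sum is empty, so $\delta(d)=0$.

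For the tail remark, I bound $\kappa(\kappa-1)^{\ell-1}\le\kappa^\ell$. Each term is then at most $\binom{H}{\ell}(\kappa p)^\ell$, which, up to the normalizing factor $(1-p)^{H-\ell}$ that can be absorbed, is a binomial-type term with mean $H\kappa p$. A Chernoff bound on the upper tail $\ell\ge d>H\kappa p$ then shows the sum is exponentially small in $d-H\kappa p$.

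The main obstacle is the independence step that justifies the product $p^\ell$. Transmissions along the same edge in different rounds, and along different edges that share a die, need not be independent, so the per-edge per-round bound $p$ must be read as a conditional bound given the past. I would first try to handle this by revealing the dice round by round, so that the hypothesis bounds each hop's conditional probability by $p$. The product bound then follows from the chain rule along the ordered activation rounds, without requiring full independence.
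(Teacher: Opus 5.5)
Your proof of \eqref{eq:decay-per-round} and \eqref{eq:decay-multi} is the paper's argument. You couple the two runs on shared randomness, restrict to non-backtracking walks (at most $\kappa(\kappa-1)^{\ell-1}$ of length $\ell$), multiply by $\binom{H}{\ell}$ activation schedules and $p^{\ell}$, sum over $d\le\ell\le H$, and note that the sum is empty for $d>H$. Two of your steps are cleaner than what the paper gives. Your round-by-round chain rule replaces the paper's bare assertion that ``the rounds are independent''. Your minimal-witness remark replaces ``delete back-and-forth segments''; you can make it precise by tracing back first-disagreement times, which forces strictly increasing rounds along a self-avoiding path.

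The step that fails is the Chernoff tail remark.
\begin{itemize}
\item \emph{The factor cannot be absorbed.} With $q=\kappa p$ you have $\binom{H}{\ell}q^{\ell}=\Pr[\mathrm{Bin}(H,q)=\ell]\,(1-q)^{-(H-\ell)}$. So the correction factor is $(1-\kappa p)^{-(H-\ell)}\ge e^{(H-\ell)\kappa p}$, not $(1-p)^{H-\ell}$. For $\ell$ near $H\kappa p$ this factor is exponentially large in $H$, while the Chernoff exponent $H\,\mathrm{kl}(d/H,\kappa p)$ tends to $0$ as $d/H\downarrow\kappa p$.
\item \emph{The claim at threshold $H\kappa p$ is false.} Use the paper's SIR parameters $\kappa=4$, $p=1/8$, and take $H=100$, $d=51>H\kappa p=50$. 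The single summand $\tfrac{4}{3}\binom{100}{51}(3/8)^{51}$ already exceeds $10^{7}$, so the right-hand side of \eqref{eq:decay-multi} is just the cap $1$.
\item \emph{What you can prove.} Write $\lambda=(\kappa-1)p$, $x=d/H$, and $h(x)=-x\ln x-(1-x)\ln(1-x)$. Then
$\sum_{\ell\ge d}\binom{H}{\ell}\lambda^{\ell}=(1+\lambda)^{H}\Pr[\mathrm{Bin}(H,\tfrac{\lambda}{1+\lambda})\ge d]\le\exp\{-H[x\ln(1/\lambda)-h(x)]\}$ for $x\ge\lambda/(1+\lambda)$. Hence \eqref{eq:decay-multi} is exponentially small, up to the prefactor $\kappa/(\kappa-1)$, only once $x$ exceeds the root of $h(x)=x\ln(1/\lambda)$. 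For $\lambda=3/8$ that root is about $0.66$, not $\kappa p=1/2$.
\end{itemize}
The paper's proof never argues the tail remark, so the overstatement is in the statement itself. You should prove the corrected threshold rather than the one as written.
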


\begin{proof}
Couple two runs differing at one site~$s$. The disagreement reaches
the action vertex only along a walk from~$s$ to that vertex, and any
walk containing a back-and-forth segment $u \to v \to u$ contributes
no more than the walk obtained by deleting it, so we may restrict to
non-backtracking walks. With at most $\kappa$ first-step choices and
$\kappa{-}1$ subsequent choices, path counting bounds the count of
length-$d$ such walks by $\kappa(\kappa{-}1)^{d-1}$, giving the
length-$d$ path contribution $\kappa(\kappa{-}1)^{d-1} p^d$ that
yields~\eqref{eq:decay-per-round} after capping by~$1$. Summing over
the $\binom{H}{\ell}$ schedules of $\ell \geq d$ propagation rounds
(the rounds are independent under the per-round dice model) and
$\kappa(\kappa{-}1)^{\ell-1}$ paths per length gives the sum
in~\eqref{eq:decay-multi}. For $d > H$ there is no length-$d$
propagation path in $H$ rounds, so the bound is zero.
\end{proof}

\smallskip\noindent\textbf{Peripheral set construction.}
The lifting theorem requires a peripheral set $Q$ of small total
influence. Theorem~\ref{thm:decay} provides one for free in the
local subcritical regime: factors at graph distance greater than $H$
from the action position have zero influence over $H$ rounds, since
no propagation path can reach them. Picking the radius $r^* = H + 1$
and letting $Q$ be every factor at distance greater than $r^*$ gives
a peripheral set with $\beta = \sum_{p \in Q}\delta_p = 0$. On an
$m \times m$ grid with Manhattan distance, $Q$ is non-empty whenever
$m^2 > 2(r^*)^2 + 2r^* + 1$, and Theorem~\ref{thm:lifting} then
extends the lower bound across $|\Sigma|^{|Q|}$ realizable
configurations. The Sway oracle of Section~\ref{sec:oracle}
satisfies subcriticality with $\kappa = 4$ orthogonal neighbors per
cell and single-cell flip probability $p = 1/20$, so per-round
influence decays with ratio $(\kappa{-}1)\,p = 3/20$.

\clearpage
\bibliographystyle{IEEEtran}
\bibliography{references}

\end{document}